\documentclass{ieeeaccess}

\usepackage{cite}
\usepackage{amsmath,amssymb,amsfonts,amsthm}
\usepackage{algorithmic}
\usepackage{graphicx}
\usepackage{textcomp}
\def\BibTeX{{\rm B\kern-.05em{\sc i\kern-.025em b}\kern-.08em
    T\kern-.1667em\lower.7ex\hbox{E}\kern-.125emX}}
    
\usepackage{xcolor}
\usepackage{bm}

\usepackage{url}
\usepackage{cite}
\usepackage{float}

\usepackage{caption}
\usepackage{comment}
\usepackage[ruled,linesnumbered]{algorithm2e}

\usepackage{float}
\usepackage{multicol}
\usepackage{color}

\newcounter{multifig}
\usepackage{thmtools,thm-restate}
\usepackage{multirow}
\usepackage{mathtools, cuted}

\usepackage{lipsum}
\usepackage{array}
\usepackage{makecell}

\usepackage[caption=false, font=footnotesize]{subfig}

\newcommand\norm[1]{\left\lVert #1 \right\rVert}

\newtheorem{theorem}{Theorem}[section]

\newtheorem{definition}[theorem]{Definition}

\newtheorem{remark}[theorem]{Remark}

\newtheorem{problem}[theorem]{Problem}

\begin{document}
\history{Date of publication xxxx 00, 0000, date of current version xxxx 00, 0000.}
\doi{10.1109/ACCESS.2023.3283503}

%\title{Local Differential Privacy in Federated Optimization}
\title{A Communication-efficient Local Differentially Private Algorithm in Federated Optimization }

\author{\uppercase{Syed Eqbal Alam}\authorrefmark{1}, \IEEEmembership{Member, IEEE},
\uppercase{Dhirendra Shukla\authorrefmark{1}, and Shrisha Rao}.\authorrefmark{2},
\IEEEmembership{Senior Member, IEEE}}
\address[1]{J. Herbert Smith Centre for Technology Management \& Entrepreneurship, Faculty of Engineering, University of New Brunswick, Fredericton, New Brunswick, Canada}
\address[2]{International Institute of Information Technology,
Bangalore, Karnataka, India}
\tfootnote{``Syed Eqbal acknowledges a partial support by Mitacs's grant number IT24468.
''}
\markboth
{Alam \headeretal: LDP-AIMD: Local Differential Privacy in Federated Optimization}
{Alam \headeretal: LDP-AIMD: Local Differential Privacy in Federated Optimization}

\corresp{Corresponding author: Syed Eqbal Alam (e-mail: syed.eqbal@unb.ca).}

\begin{abstract}
Federated optimization, wherein several agents in a network collaborate with a central server to achieve optimal social cost over the network with no requirement for exchanging information among agents, has attracted significant interest from the research community. In this context, agents demand resources based on their local computation. Due to the exchange of optimization parameters such as states, constraints, or objective functions with a central server, an adversary may infer sensitive information of agents.  { We develop a differentially-private additive-increase and multiplicative-decrease algorithm to allocate multiple divisible shared heterogeneous resources to agents in a network}. The developed algorithm provides a differential privacy guarantee to each agent in the network. { The algorithm does not require inter-agent communication, and the agents do not need to share their cost function or their derivatives with other agents or a central server}; however, they share their allocation states with a central server that keeps track of the aggregate consumption of resources. { The algorithm incurs very little communication overhead; for $m$ heterogeneous resources in the system, the asymptotic upper bound on the communication complexity is $\mathcal{O}(m)$ bits at a time step. Furthermore, if the algorithm converges in $K$ time steps, then the upper bound communication complexity will be $\mathcal{O}(mK)$ bits.}
{ The algorithm can find applications in several areas, including smart cities, smart energy systems, resource management in the sixth generation ($6$G) wireless networks with privacy guarantees, etc. 
} We present experimental results to check the efficacy of the algorithm. Furthermore, we present empirical analyses for the trade-off between privacy and algorithm efficiency.
\end{abstract}

\begin{IEEEkeywords}{ Additive increase multiplicative decrease algorithm}, AIMD algorithm, Differential privacy, Federated optimization, {Multi}-resource allocation, { heterogeneous resources}, {Multi-agent system, Communication-efficient resource allocation, Optimization and control}.
\end{IEEEkeywords}

\tfootnote{{\bf Citation Information:} S. E. Alam, D. Shukla and S. Rao, "A Communication-Efficient Local Differentially Private Algorithm in Federated Optimization," in IEEE Access, vol. 11, pp. 58254-58268, 2023, doi: 10.1109/ACCESS.2023.3283503.}

\titlepgskip=-15pt

\maketitle

\section{Introduction}
As the number of mobile and IoT devices and the volume of data increase, the risk of privacy leaks also increases. Moreover, because of privacy concerns, users may not wish to share their on-device data, which could be detrimental to many applications' usability.
Federated optimization is a step toward resolving this issue. In federated optimization, resource-constrained clients (agents) coordinate with a central server to solve an optimization problem without exposing their private data~\cite{Reddi2021, KonecnyMR2015}. Federated optimization has been used in federated learning that facilitates clients to collaboratively learn models without sharing their private on-device data. Instead, clients share the learned parameters with a central server. The central server aggregates the updates sent by the clients to update the global model~\cite{Mcmahan2017}. More information is available in recent comprehensive surveys~\cite{Kairouz2021, Yin2021}.

Although federated optimization provides privacy to agents, stronger privacy guarantees can be obtained by combining it with privacy techniques ~\cite{Mcmahan2017} such as  differential privacy \cite{Koskela2020, MAbadi2016, FiorettoH2019}, { federated $f$-differential privacy that guarantees a differential privacy to agents against an individual adversary and a group of adversaries~\cite{Zheng2021}}, { secure multiparty computation~\cite{Cramer2015}---a cryptography-based technique, etc.} Briefly, differential privacy provides a certain privacy guarantee to users storing data in a shared database~\cite{Dwork2006, Dwork2014}.  It allows an analyst to learn aggregate statistics on the database without compromising the privacy of participating users. This is done by adding randomness to the data via a trusted curator, following a certain probability distribution. Local differential privacy is proposed for scenarios where the curator is untrusted and may act as an adversary. In local differential privacy, users randomize their data before storing it in the database~\cite{Kairouz2016, Ding2017}. 
%It has attracted significant interest from the research community~\cite{Arachchige2020, WangY2020, Zhou2022}. 
%

%
In federated optimization, agents make decisions based on their local on-device computations. Because they exchange certain optimization parameters, such as states, constraints, objective functions, derivatives of objective functions, or multipliers, with the central server, an adversary may infer sensitive information of agents~\cite{Han2017, Olivier2020}. Furthermore, communication overhead between clients/agents and the central server is another major issue in federated optimization~\cite{ Caldas2019, He2022, Hamer2020}. { Privacy-preserving federated optimization solutions have applications in many areas; a few are listed as follows, healthcare \cite{Rieke2020}}, smart cities {\cite{ZZheng2022}}, the Internet of things \cite{Briggs2021}, {\cite{Zhang2023}, supply chain management---collective supply chain risk prediction \cite{Zheng2023}}, smart energy systems \cite{Dobbe2020}. { Interested readers can refer to the recent survey for more applications} \cite{Yin2021}. { In addition, because of the massive growth of resource-constrained IoT devices and the requirement of low-latency communications, scalability, and the quality of user experiences in the sixth generation ($6$G) wireless networks~\cite{YANG2022}, communication-efficient privacy-preserving federated optimization solutions will have many promising applications along with resource management. 
%
%For example, .
} 

{ Most of the work on federated optimization with differential privacy involves machine learning and deep learning techniques~\cite{Koskela2020, MAbadi2016, FiorettoH2019, Zheng2021}; we could not find any work on federated optimization with differential privacy for allocating multiple shared heterogeneous resources. This paper fills this gap.}
We develop a local differential privacy-additive increase multiplicative decrease, called \emph{LDP-AIMD} algorithm, {for multi-resource allocation in} federated settings that provides privacy guarantees to agents in the network with little communication overhead. { Additionally, the agents in the network achieve close to optimal allocations, and the network achieves close to a social optimal value}. { We consider multivariate cost functions coupled through the allocation of multiple heterogeneous resources; it is more challenging to solve such optimization problems than single variable problems; furthermore, the single resource allocation solutions used for multi-resource settings are not efficient, and provide sub-optimal solutions~\cite{Ghodsi2011}. Finally, the developed algorithm provides privacy guarantees to the agents' partial derivatives of cost functions.
%
%It becomes more challenging and will create many opportunities and usage when agents' privacy is incorporated into the federated optimization model.
}
The LDP-AIMD algorithm is motivated by the ideas from the additive increase and multiplicative decrease (AIMD) algorithm~\cite{Chiu1989, Syed2018_al}. 
The AIMD algorithm was proposed for congestion control in the Transmission Control Protocol (TCP) and is widely used in resource allocation. AIMD has recently been used to solve optimization problems wherein there is no requirement for inter-agent communication, but agents coordinate with a central server to solve the problem~\cite{Wirth2014, Syed2018_ISC2}. The algorithm incurs little communication overhead over the network.  However, there is no prior work on AIMD with differential privacy for federated optimization.

The LDP-AIMD algorithm consists of two phases: additive increase and multiplicative decrease.  In the additive increase phase, agents continually increase their resource demands until they receive a one-bit capacity constraint notification from a central server.  The central server broadcasts the one-bit capacity event notification when the aggregate resource consumption reaches its capacity.  After receiving the notification, agents enter into the multiplicative decrease phase and decrease their resource demands.  They calculate their resource demands in the multiplicative decrease phase based on the average allocation, noisy derivatives of the cost function, and other parameters.  Noise is added to the derivatives of the cost function drawn from a certain probability distribution.  In line with prior work~\cite{Wirth2014, Corless2016, Boyd2011}, we consider that the cost function of an agent is convex, continuously differentiable, and increasing in each variable. They are coupled through the allocation of multiple divisible heterogeneous resources. 
LDP-AIMD provides a differential privacy guarantee to the partial derivatives of the agents' cost functions and obtains close-to-optimal values for multiple resources. Additionally, the solution incurs very little communication overhead. { For $m$ heterogeneous divisible resources in the system, the asymptotic upper bound communication complexity will be $\mathcal{O}(m)$ bits at a time step. Furthermore, if the algorithm converges in $K$ time steps, then the upper bound communication complexity will be $\mathcal{O}(mK)$ bits.}

AIMD algorithm and its variants are implemented in several real-world applications such as electric vehicle (EV) charging \cite{Zishan2021, Shah2019}, sharing economy \cite{Syed2018_B}, optimal power generation in micro-grids \cite{Crisostomi2014}, decentralized power sharing \cite{Fan2021}, collaborative cruise control systems \cite{Wirth2014}, other applications can be found at \cite{Corless2016}, \cite{Crisostomi2018}.
Additionally, federated optimization, learning, and local differential privacy are used in many real-world applications ~\cite{Cormode2018, JWang2021}. For example, Google deployed a local differentially private model called {\em RAPPOR}~\cite{Erlingsson2014} to collect data on statistics of a user's Chrome web browser usage and settings~\cite{Chromium2022} without compromising the privacy of participating users. Microsoft deployed local differential privacy mechanisms to collect telemetry users' data to enhance users' experiences with privacy guarantees to the users~\cite{Ding2017}. Apple deployed local differential privacy techniques for their servers to learn new words generated by the users' devices that are not in the device dictionary; they classify these words in groups with privacy guarantees to the users~\cite{Thakurta2017}.
Furthermore, Google uses federated learning in Gboard (Google's virtual keyboard) for next-word prediction~\cite{AndrewH2018}, and emoji prediction from the typed text~\cite{Swaroop2019}. NVIDIA and Massachusetts General Brigham Hospital developed a federated learning-based model to predict COVID-19 patients' oxygen needs during the recent pandemic~\cite{NVIDIA2020}.

  %The algorithm is a novel extension of the deterministic AIMD algorithm \cite{Syed2018_al}.

%\subsection{Motivating examples}

\subsection{Paper Contributions}
{ As discussed previously, though there are several works on federated optimization with differential privacy, most of the work deal with training a global model in the machine learning application areas (federated learning). Our work, however, is the first in allocating multiple heterogeneous divisible resources in federated settings with differential privacy guarantees, which brings many opportunities and challenges.} Our main contributions to this paper are as follows:
\begin{itemize}

%\item <<Formulation of local differential privacy in federated optimization as a problem>>
\item[$\square$] We model the multi-resource allocation problem as a federated optimization problem that provides differential privacy guarantees to agents in the network. { In the model, each agent has a multivariate cost function wherein a variable represents the allocation of a resource. Such optimization problems are more challenging to solve than single-variable optimization problems.

\item[$\square$] Our algorithm is based on the additive increase multiplicative decrease (AIMD) algorithm, incurs little communication overhead, provides close to optimal solutions and guarantees differential privacy.}
{ Existing AIMD-based resource allocation solutions do not automatically provide differential privacy guarantees to participating agents~\cite{Wirth2014, Zishan2021, Shah2019, Syed2018_ISC2,Alam2022_phd}.}

In the model, agents do not need to share their private information with other agents or a server. 
The agents randomize their partial derivatives of the cost functions before calculating their resource demands.
In the model, we consider a central server that keeps track of aggregate resource consumption at a time instant and broadcasts a one-bit notification in the network when aggregate resource consumption reaches resource capacity. Although the central server knows the aggregate resource consumption, it does not know the partial derivatives of the cost functions or the cost functions of the agents.
%In the LDP-AIMD algorithm, an agent randomizes its partial derivatives of the cost functions 
%
Furthermore, the solution incurs negligible communication overhead; the server broadcasts a one-bit capacity constraint notification in the network when aggregate resource demands reach resource capacity. Additionally, the communication complexity is independent of the number of agents in the network. { For $m$ resources in the system, the asymptotic upper bound communication complexity will be $\mathcal{O}(m)$ bits at a time step. Furthermore, if the algorithm converges in $K$ time steps, then the upper bound communication complexity will be $\mathcal{O}(mK)$ bits.}

\item[$\square$] { We show theoretical results for the LDP-AIMD algorithm for noises drawn from Gaussian and Laplacian distributions.}   
\item[$\square$] We present experimental results to check the empirical effectiveness of the algorithm. We observe that the algorithm provides $(\epsilon_{1i} + \epsilon_{2i})$-local differential privacy and $(\epsilon_{1i}+\epsilon_{2i}, \delta_{1i} +\delta_{2i})$-local differential privacy guarantees to an agent for two shared resources in the system, respectively, depending on the probability distribution of noise. We also analyze the trade-off between privacy and the algorithm's efficiency. 
\end{itemize}
{ \subsection{Outline}
The paper is organized as follows: Section \ref{lab:notations} presents the notations used in the paper and a brief background on the AIMD algorithm and deterministic AIMD algorithm. Section \ref{prob_form_1} presents the problem formulation. Section \ref{lab:LDP-AIMD} presents our developed algorithm, the LDP-AIMD algorithm. Additionally, theoretical results on the LDP-AIMD algorithm with noise drawn from the Laplacian and Gaussian distributions are presented in the section. Section \ref{exp_results} presents the experimental setup and results. Section \ref{sec:background} describes the related work on distributed resource allocation, distributed optimization, federated optimization, and differential privacy. Section \ref{conc} provides the conclusion of the paper, and finally, Appendix  \ref{proofs} presents the proofs of theorems stated in Section \ref{lab:LDP-AIMD}. }

\section{Notations and Background} \label{lab:notations}
\subsection{Notations}
Let us consider that $n$ agents in a multi-agent system collaborate to access $m$ shared divisible heterogeneous  resources and wish to minimize the social cost of the system. Let the capacities of resources be $C_1, C_2, \ldots, C_m$, respectively. Let the set of real numbers be denoted by $\mathbb{R}$, the set of positive real numbers be denoted by $\mathbb{R}_+$, and the set of $m$-dimensional vectors of real numbers be denoted by $\mathbb{R}^m$. For $i=1,2,\ldots,n$ and $j=1,2,\ldots,m$, let the amount of resource $j$ allocated to agent $i$ be denoted by $x_{ji} \in [0,C_j]$. Furthermore, suppose that allocating resources incurs a certain cost to each agent, captured in the agent's cost functions. Let the cost function of agent $i$ be denoted by $f_i: \mathbb{R}^m_+ \to \mathbb R_+$. Let the cost function $f_i$ be twice continuously differentiable, strictly convex, and increasing in each variable. 
Let $\nu=0,1,2,\ldots$, denote the time steps, and let $x_{ji}(\nu)$ denote the amount of instantaneous allocation of resource $j$ of agent $i$ at time step $\nu$. 

For $n$ agents in the network, we model the multi-resource allocation problem as the following federated optimization problem:

\begin{problem} \label{obj_fn1}
	\begin{align*} 
		\begin{split}
			\min_{{x}_{11}, \ldots, {x}_{mn}} \quad &\sum_{i=1}^{n} f_i(x_{1i}, x_{2i},
			\ldots, x_{mi}),    		\\
			\mbox{subject to} \quad
			&\sum_{i=1}^{n} x_{1i} = C_1; \sum_{i=1}^{n} x_{2i} = C_2; \ldots; \sum_{i=1}^{n} x_{mi} = C_m		\\
			&x_{11} \geq 0; x_{12} \geq 0; \ldots; x_{mn} \geq 0.
		\end{split}
	\end{align*}
\end{problem}
Let the solution to this optimization problem be denoted by $x_{ji}^*$. We assume that the solution is strictly positive. Furthermore, let $x^{*} = (x_{11}^{*}, \ldots, x_{mn}^{*})$. As the cost function $f_i$ is strictly convex and the constraint sets are compact, there exists a unique optimal solution \cite{Wirth2014}. { We assume that all agents have strictly convex cost functions that are differentiable and increasing. Furthermore, the agents are cooperative and demand the resources with their true valuations with the aim to minimize the overall cost to the network.}

\subsection{AIMD Algorithm}
We present the fundamental additive-increase and multiplicative-decrease (AIMD) algorithm. The AIMD algorithm consists of two phases---additive increase and multiplicative decrease. In the additive increase (AI) phase, an agent increases its resource demand linearly by a constant $\alpha_j \in (0,C_j]$, called {\em additive increase factor}, until it receives a one-bit {\em capacity event} notification from the central server. The central server tracks the aggregate resource consumption and broadcasts the one-bit capacity event notification when the total resource consumption reaches its capacity. For time steps $\nu=0,1,2,\ldots$, the additive increase phase is formulated as
\begin{align}  \label{updateDAIMD_add}
	x_{ji}(\nu+1) = x_{ji} (\nu) + \alpha_j, \ j=1,2,\ldots,m.
\end{align}
After receiving the capacity event notification, an agent responds in a probabilistic way to decrease its resource demand. Suppose that the $k$'th capacity event occurs at time step $\nu$. Then, for $0 \leq \beta_j <1$ and $\bm{\beta_j} \in \{\beta_j,1\}$ with certain probability distribution, the multiplicative decrease (MD) phase is formulated as
\begin{equation}
      x_{ji} (\nu+1) = \bm{\beta_j} x_{ji}(\nu).
\end{equation}
After the multiplicative decrease phase, agents again enter into the AI phase until they receive the next capacity event notification, and they continue this. Interested readers can refer to \cite{Wirth2014} and \cite{Corless2016} for details on the AIMD and the AIMD-based resource allocation models.

\subsection{Deterministic AIMD Algorithm}   \label{algo_det}
We briefly describe here the deterministic AIMD algorithm by \cite{Syed2018_al}, which is the starting point of the LDP-AIMD 
algorithm. The deterministic AIMD algorithm consists of two phases: The additive increase and the multiplicative decrease phases. In the additive increase phase, an agent increases its resource demand linearly by the additive increase factor $\alpha_j \in (0,C_j]$, until it receives a one-bit {\em capacity event} notification from the central server when the aggregate demand of a resource reaches its capacity. For time step $\nu=0,1,2,\ldots$, the additive increase phase is formulated as
\begin{align}  \label{updateDAIMD_add1}
	x_{ji} (\nu+1) = x_{ji} (\nu) + \alpha_j.
\end{align}
After receiving the capacity event notification, agents decrease their resource demands in a deterministic way using multiplicative decrease factor $0 \leq \beta_j <1$ and scaling factor $0 < \lambda_{ji} \leq 1$.
 Let $k_j$ denote the capacity event of resource $j$, for $j=1,2,\ldots,m$. Let $\overline{x}_{ji}(k_j)$ denote the amount of average allocation of resource $j$ at capacity event $k_j$ for agent $i$. For all $i$, $j$, and $k_j$, the average allocation is calculated as follows:
\begin{align}  \label{average_eqn}
	\overline{x}_{ji}(k_j)=\frac{1}{k_j+1} \sum_{\ell=0}^{k_j} x_{ji}(\ell).
\end{align}

Suppose that the $k_j$'th capacity event occurs at time step $\nu$. Then the multiplicative decrease phase is formulated as
\begin{align}   \label{updateDAIMD_mul1}
x_{ji} (\nu+1) = \Big (\lambda_{ji}(k_j) \beta_j  + \big(1 -\lambda_{ji}(k_j) \big) \Big)x_{ji}(\nu). 
\end{align} 
The agent $i$ calculates $\lambda_{ji}(k_j)$ for resource $j$ as in \eqref{prob_x}.
Moreover, after decreasing the demands, agents again start increasing their demands linearly by $\alpha_j$ until they receive the next capacity event notification. This process repeats over time to obtain optimal values over long-term average allocations. 

For $i=1,2,\ldots,n$, $j=1,2,\ldots,m$, let $\frac{\partial}{\partial_{x_{ji}}} f_i(x_{1i}, x_{2i}, \ldots, x_{mi})$ denote the partial derivative with respect to $x_{ji}$ of cost function $f_i(\cdot)$. 
For $k \in \mathbb{N}$, let $t_{k_j}$ denote the time at which $k_j$'th capacity event occurs. Let $\overline{x}_{1i}(t_{k_1})$ denote the average allocation of agent $i$ over the capacity events for resource $1$ until time instant $t_{k_1}$, and $\overline{x}_{2i}(t_{k_1})$ denote the average allocation of agent $i$ over capacity events for resource $2$ until time instant $t_{k_1}$.
Recall that $\overline{x}_{ji}(t_{k_j})$ is also denoted by $\overline{x}_{ji}(k_j)$; it is calculated as in \eqref{average_eqn}, for $j=1,2,\ldots,m$. 
Additionally, let $\Gamma_j>0$ denote the normalization factor of resource $j$, for $j=1,2,\ldots,m$. It is chosen such that $0<\lambda_{ji}(k_j) \leq 1$ holds.
For all $i$, $j$, at the $k_j$'th capacity event $\lambda_{ji}(k_j)$ is obtained as follows:
\begin{align} \label{prob_x} 
	\lambda_{ji}(k_j) = \Gamma_j  
	\frac{\frac{\partial}{\partial_{x}} \Big \vert_{x = \overline{x}_{ji}(t_{k_j})} f_i(\overline{x}_{1i}(t_{k_j}), \ldots,
		\overline{x}_{mi}(t_{k_j}))}{\overline{x}_{ji}(t_{k_j})}.
\end{align}
%
%Note that at the start of the system, the central server initializes the parameters $\Gamma_j$, $\alpha_j$, and $\beta_j$ with desired values and shares these parameters with agents in the network.
%	
%	 For the sake of simplicity of analysis, we consider that $\alpha_j$ and $\beta_j$ are the same for all agents that is $\alpha_{ji}=\alpha_{ju} = \alpha_{j}$; analogously, $\beta_{ji}=\beta_{ju} = \beta_{j}$.
%
Furthermore, each agent runs its algorithm to demand shared resources. It is experimentally shown  that following the approach with appropriate values of $\lambda_{ji}(k_j)$, $\alpha_j$, $\beta_j$, and $\Gamma_j$, the long-term average allocations of agents converge to optimal values, that is, 
\begin{align*}
\lim_{k_j \to \infty} \overline{x}_{ji}(k_j) = x_{ji}^{*},
\end{align*}
 and the system achieves an optimal social cost
 
  $\sum_{i=1}^n f_i(x_{1i}^{*}, \ldots,x_{mi}^{*})$.	

% 
%The partial derivatives of cost functions of all agents for a particular resource are in consensus at the optimal point \cite{Syed2018_ISC2}. We state the following for resource $1$ for a system with two resources:
%\begin{align*}
%	\frac{\partial}{\partial_{x_{1i}}} f_i(x_{1i}, x_{2i})  = 	\frac{\partial}{\partial_{x_{1u}}} f_u(x_{1u}, x_{2u}) \Big \vert_{x_{1i}=x_{1i}^*, x_{1u}=x_{1u}^*}, 
%\end{align*}
%for $i,u = 1,2,\ldots,n$.
%Analogously, we can state for resource $2$.	

In the deterministic AIMD algorithm, an adversary may obtain the actual allocations of an agent and may infer the derivatives of the cost function or the cost function of the agent. On the other hand, the LDP-AIMD algorithm provides privacy guarantees to agents in the network on the partial derivatives of their cost functions and obtains close to optimal social cost.

\section{Problem formulation} \label{prob_form_1}

%
%We also assume that there exists a central server that tracks the aggregate consumption of resources and broadcasts a one-bit notification in the network when the aggregate consumption reaches the capacity of the resource. 
%After receiving this notification, agents reduce the resource demands. %Moreover, the goal here is to achieve a minimum social cost with a certain privacy guarantee to each agent in the network.

%
%
%\subsection{Preliminaries}

%\subsection{Problem formulation}
We aim to protect the privacy of agents' partial derivatives of cost functions with a certain privacy guarantee. Let us assume that agent $i$ stores its private information in data-set $D_{i}$, for $i=1,2,\ldots,n$.
%
%{ \color{magenta} Furthermore, let $\frac{\partial}{\partial_{x}} \Big \vert_{x = \overline{x}_{ji}(t_{k_j})} f_i(\overline{x}_{1i}(t_{k_j}), \ldots, \overline{x}_{mi}(t_{k_j}))$ be denoted by $f^o_{ji}(k_j)$.}
%
Let $k_1$ capacity events of resource $1$ and $k_2$ capacity events of resource $2$ occur until time instant $t_k$. Recall that, for $k \in \mathbb{N}$, $\overline{x}_{1i}(t_{k})$ denote the average allocation of agent $i$ over the capacity events for resource $1$ until time instant $t_{k}$ ( refer Equation \eqref{average_eqn}). And, $\overline{x}_{2i}(t_{k})$ denotes the average allocation of agent $i$ over capacity events for resource $2$ until time instant $t_{k}$.
Note that for the sake of simplicity of notations, we consider two resources here. 

For $i=1,2,\ldots,n$,

 let $f^o_{1,i}(t_{k}) \triangleq \frac{\partial}{\partial_{x}} \Big \vert_{x = \overline{x}_{1i}(t_{k})} f_{i}(\overline{x}_{1i}(t_{k}), \overline{x}_{2i}(t_{k}))$ and $f^o_{2,i}(t_{k}) \triangleq \frac{\partial}{\partial_{x}} \Big \vert_{x = \overline{x}_{2i}(t_{k})} f_{i}(\overline{x}_{1i}(t_{k}), \overline{x}_{2i}(t_{k}))$. 
For $i=1,2,\ldots,n$ and $\overline{x}'_{1i}, \overline{x}'_{2i} \geq 0$,

 let $f'^o_{1,i}(t_{k}) \triangleq \frac{\partial}{\partial_{x}} \Big \vert_{x = \overline{x}'_{1i}(t_{k})} f_{i}(\overline{x}'_{1i}(t_{k}), \overline{x}'_{2i}(t_{k}))$ and $f'^o_{2,i}(t_{k}) \triangleq \frac{\partial}{\partial_{x}} \Big \vert_{x = \overline{x}'_{2i}(t_{k})} f_{i}(\overline{x}'_{1i}(t_{k}), \overline{x}'_{2i}(t_{k}))$. 
Let $D_{1,i}(t_{k})$ denote the set of partial derivatives of the cost functions of agent $i$ up to the $k_1$'th capacity event of resource $1$ (occurred until time instant $t_k$), we define it as $D_{1,i}(t_{k}) \triangleq \left \{f^o_{1,i}(t_{0}), f^o_{1,i}(t_{1}),\ldots, f^o_{1,i}(t_{k}) \right \}$.
Additionally, let $D_{2,i}(t_{k})$ denote the set of partial derivatives of the cost functions of agent $i$ up to the $k_2$'th capacity event of resource $2$ (occurred until time instant $t_k$), we define it as $D_{2,i}(t_{k}) \triangleq \left \{f^o_{2,i}(t_{0}), f^o_{2,i}(t_{1}),\ldots, f^o_{2,i}(t_{k}) \right \}$. 
 
Let $\Delta q_{1i}$ denote the sensitivity for resource $1$, and let $\Delta q_{2i}$ denote the sensitivity for resource $2$. We define the $p$-norm sensitivity as follows.
\begin{definition}[$p$-norm Sensitivity] \label{def:sensitivity}
For $i=1,2,\ldots,n$, let $f^o_{1,i} = \frac{\partial}{\partial_{x}} \Big \vert_{x = \overline{x}_{1i}} f_{i}(\overline{x}_{1i}, \overline{x}_{2i})$ and $f^o_{2,i} = \frac{\partial}{\partial_{x}} \Big \vert_{x = \overline{x}_{2i}} f_{i}(\overline{x}_{1i}, \overline{x}_{2i})$,
and for $\overline{x}'_{1i}, \overline{x}'_{2i} \geq 0$, let $f'^o_{1,i} = \frac{\partial}{\partial_{x}} \Big \vert_{x = \overline{x}'_{1i}} f_{i}(\overline{x}'_{1i}, \overline{x}'_{2i})$ and $f'^o_{2,i} = \frac{\partial}{\partial_{x}} \Big \vert_{x = \overline{x}'_{2i}} f_{i}(\overline{x}'_{1i}, \overline{x}'_{2i})$. 
 For $p \in \mathbb{N}$, the $p$-norm sensitivity metric for agent $i$ for resource $1$ is defined as
\begin{align} \label{eq:sensitivity}
		\Delta q_{1i} \triangleq \norm{f^o_{1,i} - f'^o_{1,i}}_p. 
	\end{align}
	Analogously, the $p$-norm sensitivity metric $\Delta q_{2i}$ for resource $2$ is defined.
\end{definition}
%
%{\color{magenta} Recall that $f^o_{1,1i}(k_1)$ denotes $\frac{\partial}{\partial_{x}} \Big \vert_{x = \overline{x}_{1i}(t_{k_1})} f_{1i}(\overline{x}_{1i}(t_{k_1}), \overline{x}_{2i}(t_{k_1}))$, and $f^o_{2,1i}(k_1)$ denotes $\frac{\partial}{\partial_{x}} \Big \vert_{x = \overline{x}_{1i}(t_{k_1})} f_{2i}(\overline{x}_{1i}(t_{k_1}), \overline{x}_{2i}(t_{k_1}))$, for $i=1,2,\ldots,n$. }\marginpar{update the notations.}
For agent $i$, $i=1,2, \ldots,n$, we define the local $\epsilon_i$-differential privacy as follows:
\begin{definition}[Local $\epsilon_i$-differential privacy] \label{def:Local-DP} 
Let $f_{i}: \mathbb{R}_+^2 \to \mathbb{R}_+$ be the cost functions of agent $i$. 
%
%Additionally, let $\frac{\partial}{\partial{x}_{ji}} f_{i}$ and $\frac{\partial}{\partial{x}_{ji}} f_{i}$ denote the partial derivative of the cost function $f_{1i}$ and $f_{2i}$, respectively, for $j=1,2$. 
%
Let the set $\mathcal{D}_{i}$ be the input values to the privacy mechanism, defined as
\begin{align*}
\mathcal{D}_{i} & \triangleq \{ \frac{\partial}{\partial{x}_{1i}} f_{i}(x_{1i}, x_{2i}), \frac{\partial}{\partial{x}_{2i}} f_{i}(x_{1i}, x_{2i})\} \\ & \cup \{ 
\frac{\partial}{\partial{x}'_{1i}} f_{i}(x'_{1i}, x'_{2i}), \frac{\partial}{\partial{x}'_{2i}} f_{i}(x'_{1i}, x'_{2i})  \}.
\end{align*}  Let $S$ be the sample space. Furthermore, let $M_{q_i}:S \times \mathcal{D}_{i} \to \mathbb{R}$ be a privacy mechanism, and $q_i: \mathcal{D}_{i} \to \mathbb{R}$ be the query on $\mathcal{D}_{i}$. Then for $\epsilon_i > 0$ and output values $\eta_i \in \mathbb{R}$, if the following holds
	\begin{align}
	&\mathbb{P} \left(M_{q_i} \left (\frac{\partial}{\partial{x}_{1i}} f_{i}(x_{1i}, x_{2i}) \right) =\eta_i \right) \nonumber \\ &\leq \exp{(\epsilon_i)} \cdot \mathbb{P} \left(M_{q_i} \left (\frac{\partial}{\partial{x}'_{1i}} f_{i}(x'_{1i}, x'_{2i}) \right) =\eta_i \right),
	\end{align}
	then $M_{q_i}$ is called an $\epsilon_i$-local differential privacy mechanism.
\end{definition}
Here, $\epsilon_i$ is the privacy loss bound of agent $i$.

\begin{definition}[Local $(\epsilon_i, \delta_i)$-differential privacy] \label{def:Local-DP2} 
Let $f_{i}: \mathbb{R}_+^2 \to \mathbb{R}_+$ be the cost functions of agent $i$. 
%
%Additionally, let $\frac{\partial}{\partial{x}_{ji}} f_{i}$ and $\frac{\partial}{\partial{x}_{ji}} f_{i}$ denote the partial derivative of the cost function $f_{1i}$ and $f_{2i}$, respectively, for $j=1,2$. 
%
Let the set $\mathcal{D}_{i}$ be the input values to the mechanism, defined as
\begin{align*}
\mathcal{D}_{i} &\triangleq \{ \frac{\partial}{\partial{x}_{1i}} f_{i}(x_{1i}, x_{2i}), \frac{\partial}{\partial{x}_{2i}} f_{i}(x_{1i}, x_{2i})\} \\&\cup \{\frac{\partial}{\partial{x}'_{1i}} f_{i}(x'_{1i}, x'_{2i}), \frac{\partial}{\partial{x}'_{2i}} f_{i}(x'_{1i}, x'_{2i}) \}.
\end{align*}
 Let $S$ be the sample space. Furthermore, let $M_{q_i}:S \times \mathcal{D}_{i} \to \mathbb{R}$ be a privacy mechanism, and $q_i: \mathcal{D}_{i} \to \mathbb{R}$ be the query on $\mathcal{D}_{i}$.
%
%Let $S$ be the sample space and $\mathcal{D}_{1i}$ be the set of data-sets of agent $i$. Furthermore, let $f_{1i}: \mathbb{R}_+^2 \to \mathbb{R}_+$, and  let $f_{2i}: \mathbb{R}_+^2 \to \mathbb{R}_+$ be the cost functions of agent $i$. Let $D_{1i} \triangleq \left \{ \frac{\partial}{\partial{x}_{1i}} f_{1i}(x_{1i}, x_{2i}), \frac{\partial}{\partial{x}_{2i}} f_{1i}(x_{1i}, x_{2i}) \right \}$ and 
%	
%$D_{2i} \triangleq \left \{ \frac{\partial}{\partial{x}_{1i}} f_{2i}(x_{1i}, x_{2i}), \frac{\partial}{\partial{x}_{2i}} f_{2i}(x_{1i}, x_{2i}) \right \}$ be the sets of partial derivatives of the cost functions of agent $i$, and be the input values. Let $M_{q_i}:S \times \mathcal{D}_{1i} \to \mathbb{R}$ be a privacy mechanism, and $q_i: \mathcal{D}_{1i} \to \mathbb{R}$ be the query on $\mathcal{D}_{1i}$.
%
Then for $\epsilon_i, \delta_i > 0$, and for all input values and for all output values $\eta_i \in \mathbb{R}$, if the following holds
	\begin{align}
	& \mathbb{P} \left(M_{q_i} \left (\frac{\partial}{\partial{x}_{1i}} f_{i}(x_{1i}, x_{2i}) \right) =\eta_i \right) \nonumber \\ & \leq \exp{(\epsilon_i)} \cdot \mathbb{P} \left(M_{q_i} \left (\frac{\partial}{\partial{x}'_{1i}} f_{i}(x'_{1i}, x'_{2i}) \right) =\eta_i \right) + \delta_i,
	\end{align}
	then $M_{q_i}$ is called an $(\epsilon_i, \delta_i)$-local differential privacy mechanism.
\end{definition}
Here, $\epsilon_i$ and $\delta_i$ are the {\em privacy loss} bounds; $\delta_i$ denotes the probability that the output of the mechanism $M_{q_{i}}$ varies by a multiplicative factor of $\exp(\epsilon_i)$ when applied to  data-sets $\mathcal{D}_{i}$. The smaller values of $\delta_i$ and $\epsilon_i$ imply that higher privacy is preserved. Note that for $\delta_i=0$, the mechanism preserves $\epsilon_i$-local differential privacy. 

Recall that the average allocation is calculated as in \eqref{average_eqn}; let the vector $\overline{x} \in (\mathbb{R}_+^{n})^m$ denote $(\overline{x}_{11}, \ldots, \overline{x}_{mn})$. Also recall that the solution to federated optimization Problem \ref{obj_fn1} is denoted by $x^{*} = (x_{11}^{*}, \ldots, x_{mn}^{*})$. In this paper, we develop a local differentially private AIMD algorithm, the {\em LDP-AIMD} algorithm. The LDP-AIMD solves the federated optimization Problem \ref{obj_fn1} with close to optimal values and provides a privacy guarantee on the partial derivatives of the cost functions of agents. That is, for two resources in the system, $j=1,2$ and $k_j \in \mathbb{N}$, we obtain
\begin{align*}
	\lim_{k_j \to \infty} \overline{x}_{ji}(k_j) \approx x_{ji}^*, \ \mathrm{i=1,2,\ldots,n}, 
\end{align*}
with a (local) differential privacy guarantee on the partial derivatives of the cost function of each agent with negligible communication overhead.

\section{The LDP-AIMD Algorithm} \label{lab:LDP-AIMD}
This section presents the local differentially private additive-increase multiplicative-decrease (LDP-AIMD) algorithm. {  As we discussed earlier, in the deterministic AIMD algorithm (described in Section \ref{algo_det}), adversaries may infer the actual resource allocations of agents and may also infer the partial derivatives of the cost functions of an agent. Although the agents make their decisions in the deterministic AIMD algorithm based on their local computations; however, the agents are prone to privacy breach.} The LDP-AIMD algorithm bridges the gap and provides a strong privacy guarantee on the partial derivatives of agents' cost functions and so on their cost functions and obtains close to optimal social cost.

\subsection{LDP-AIMD algorithm for multiple resources}

%\subsection{Algorithm}
\begin{figure*}[!ht]
    \centering
\includegraphics[width=1\textwidth,clip=true,trim=7.5cm 8.05cm 0cm 4.8cm]{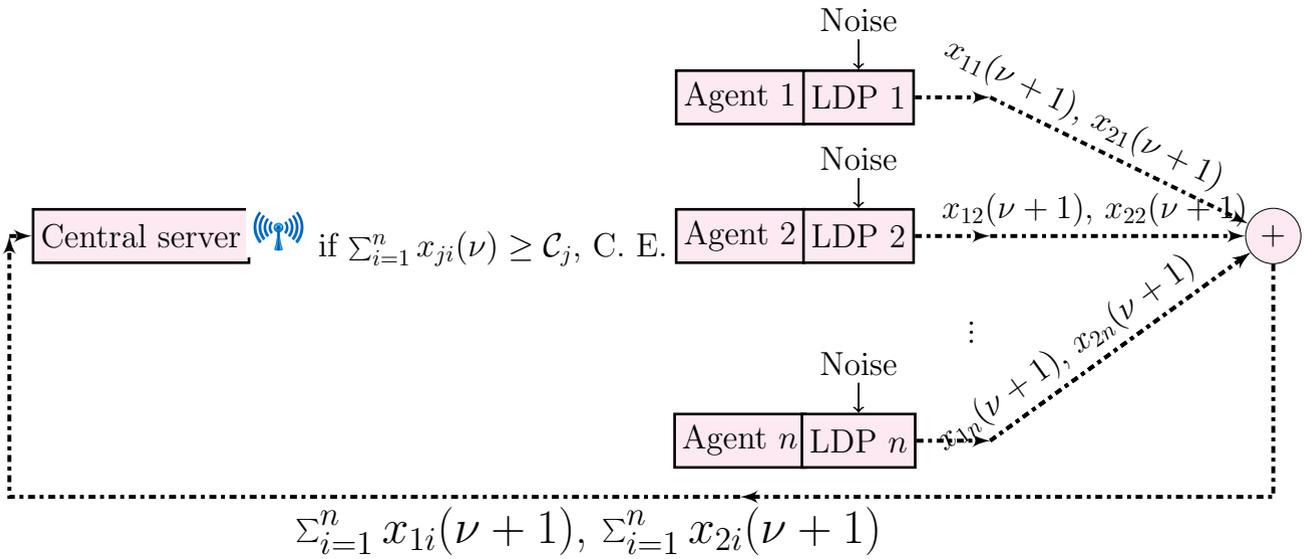}
	\caption{The local differential privacy AIMD's (LDP-AIMD) system diagram for multi-resource allocation. Here, C.E. represents the broadcast of a one-bit capacity event notification in the network.}
		\label{fig:Sys_diag}
\end{figure*}

In the LDP-AIMD algorithm, each agent works as a curator and randomizes partial derivatives of its cost function. The LDP-AIMD algorithm has two phases similar to the classical AIMD algorithm---additive increase (AI) and multiplicative decrease (MD). We denote the additive increase factor by $\alpha_j>0$, multiplicative decrease factor by $0 \leq \beta_j<1$, and normalization factor by $\Gamma_j>0$, for $j=1,2,\ldots,m$. Each agent runs its LDP-AIMD algorithm to demand shared resources. Additionally, we consider a central server that keeps track of the aggregate allocation of resources and broadcasts a one-bit capacity event notification when the aggregate demand reaches the capacity of the resource. The central server initializes the parameters $\Gamma_j$, $\alpha_j$, and $\beta_j$ with desired values and sends these to agents when they join the network. The algorithm works as follows: After joining the network, agents start increasing their demands for shared resource $j$ linearly by the additive increase factor $\alpha_j>0$ until they receive a one-bit capacity event notification from the central server. The central server broadcasts this notification when the aggregate resource demand reaches resource capacity.
We illustrate the system diagram in Figure \ref{fig:Sys_diag}.

For discrete time steps $\nu=0,1,2,\ldots,$ we describe the AI phase as:
\begin{align}  \label{eq:AI-updated}
	x_{ji} (\nu+1) = x_{ji} (\nu) + \alpha_j.
\end{align}
After receiving the capacity event notification from the central server, an agent decreases its demands using multiplicative decrease factor $0 \leq \beta_j<1$, the average resource allocations, and the noisy scaling factor $\hat{\lambda}_{ji}$. The noisy scaling factor $\hat{\lambda}_{ji}(k_j)$ is calculated using the partial derivatives of cost functions $f_i$ of agent $i$ and  noise drawn from a certain probability distribution. Let $d_{ji}(k_j)$ be a random variable and denotes the noise added to the partial derivatives of cost functions of agent $i$ for resource $j$. It is drawn from certain probability distribution at the $k_j$'th capacity event. 
%Moreover, noise is added to the partial derivatives of the cost functions at each capacity event to obtain the noisy scaling factor $\hat{\lambda}_{ji}$.
At the $k_j$'th capacity event, for $i=1,2,\ldots,n$ and $j=1,2,\ldots,m$, we obtain $\hat{\lambda}_{ji}(k_j)$ as follows:
\begin{align}  \label{prob_x2}
	&\hat{\lambda}_{ji}(k_j) \nonumber \\&= \Gamma_j  
		\frac{\Big \vert\frac{\partial}{\partial_{x}} \Big \vert_{x = \overline{x}_{ji}(k_j)} f_i(\overline{x}_{1i}(t_{k_j}), \ldots, \overline{x}_{mi}(t_{k_j})) + d_{ji}(k_j) \Big \vert}{\overline{x}_{ji}(k_j)}.
\end{align}
%
%Recall that noisy partial derivative $f^o_{ji}$ is obtained as in \eqref{part_deriv_noisy}. 
The value of normalization factor $\Gamma_j$ is chosen such that $ 0< \hat{\lambda}_{ji}(k_j) \leq 1$.
%
%

%For $i=1,2,\ldots,n$ and $j=1,2,\ldots,m$, let $f^o_{ji}$ denote the noisy partial derivatives of the cost function $f_i$.
%
%We obtain $f^o_{ji}$ at the $k_j$'th capacity event as follows:
%\begin{align} \label{part_deriv_noisy}
%	f^o_{ji}(k_j) &=   
%	\frac{\partial}{\partial_{x}} \Big \vert_{x = \overline{x}_{ji}(t_{k_j})} f_i(\overline{x}_{1i}(t_{k_j}), \ldots,
%		\overline{x}_{mi}(t_{k_j})) + d_{ji}(k_j), \\ & d_{ji}(k_j) \sim \mathcal{N}(0,\sigma_{ji}) \nonumber.
%\end{align}

% to obtain $(\epsilon_i,\delta_i)$-differential privacy. 
%
%
Let us suppose that the $k_j$'th capacity event occurs at time step $\nu$. Additionally, let $S_j(\nu) \in \{0,1\}$ denote the capacity event notifications, for resource $j$, $j=1,2,\ldots,m$; it is updated to $S_j(k_j)=1$, when the $k_j$'th capacity event occurs.
The multiplicative decrease phase is formulated as follows:
\begin{align}  \label{eq:noisyMD}
	x_{ji} (\nu+1) = \Big (\hat{\lambda}_{ji}(k_j) \beta_j  + \big(1 -\hat{\lambda}_{ji}(k_j) \big) \Big)x_{ji}(\nu). 
\end{align} 
%
%Recall that $\Delta q_{ji}$ denotes the sensitivity of the query. Let $\epsilon_i, \delta_i$ be the privacy losses. 
%

After decreasing the resource demands, the agents again start to increase their demands linearly until they receive the next capacity event notification; this process is repeated over time. 
By following this, agents achieve privacy guarantees on their partial derivatives, and their long-term average allocations reach close to optimal allocations. 
%
%{ As the random variable (noise) $d_{ji}(k_j)$ follows Gaussian distribution and $\sigma_{ji} \geq \frac{\Delta q_{ji}}{\epsilon_i} \sqrt{2 (\ln \frac{1.25}{\delta_i})}$, from Theorem \ref{Thm:Gaussian}, the mechanism is $(\epsilon_i, \delta_i)$-differentially private. We use this result to verify our experimental results in this paper.}
%
The differentially private AIMD algorithm for agents is presented in Algorithm \ref{algo_DP}, and the algorithm of the central server is presented in Algorithm \ref{algo_DP_server}.
\begin{algorithm}[htb] 
\LinesNumbered
Input: $\sum_{i=1}^n x_{ji}(\nu)$, for $j = 1,2,\ldots,m$, $\nu \in \mathbb{N}$\;
	Output:
	$S_{j}(\nu) \in \{0,1\}$, for $j = 1,2,\ldots,m$, $\nu \in
	\mathbb{N}$\;
	Initialization: $\Gamma_j \leftarrow \frac{1}{1000}$\;
	\For{$\nu=0,1,2,\ldots$}{
		
		\For{$j = 1,2,\ldots,m$}{				
			
			\eIf{$\sum_{\ell=0}^{n} x_{ji}(\nu) \geq C_j$}{
			%$k_j \leftarrow k_j+1$ \;
			%
			Update the capacity event notification:
				\begin{align*}
					S_{j}(\nu+1) \leftarrow 1;
				\end{align*}
			}{
			\begin{align*}
					S_{j}(\nu+1) \leftarrow 0;
				\end{align*}
			}
			%$\overline{x}_{ji}(k+1) \leftarrow \frac{k+1}{k+2}
			%\overline{x}_{ji}(k) + \frac{1}{k+2} x_{ji}(k+1);$
	} }
	
	\caption{Algorithm of the central server.}
	\label{algo_DP_server}
\end{algorithm}
The LDP-AIMD algorithm provides privacy guarantees to agents, and the average allocations asymptotically reach close to the optimal values, and the system incurs negligible communication overhead. Note that the more noise is added to the partial derivatives of the cost functions, the more privacy is guaranteed, but the convergence of average allocations will go farther from the optimal point, and the solution's efficiency will decrease.

{ Readers can find a similar formulation for the expected value of opinions of agents used in \cite{YShang2018} in the context of social dynamical systems.}

\begin{algorithm}[htb]   Input: $\Gamma_j$, $S_j$, for $j = 1,2,\ldots,m$\;
	Output:
	$x_{ji}(k+1)$, for $j = 1,2,\ldots,m$, $k \in
	\mathbb{N}$\;
	
	Initialization: $x_{ji}(0) \leftarrow 0$ and
	$\overline{x}_{ji}(0) \leftarrow x_{ji}(0)$, $k_j \leftarrow 0$, for
	$j = 1,2,\ldots,m$\;
	
	\For{$\nu=0,1,2,\ldots$}{
		
		\For{$j = 1,2,\ldots,m$}{				
			
			\eIf{$S_{j}(\nu)=1$}{
			$k_j \leftarrow k_j+1$\;
			Update the scaling factor $\hat{\lambda}_{ji}(k_j)$ as in \eqref{prob_x2}\;
			Update the resource demand:
				\begin{align*}
					&x_{ji}(\nu+1) \\ \leftarrow &\left (\hat{\lambda}_{ji}(k_j) \beta_j   +\left(1-\hat{\lambda}_{ji}(k_j)\right) \right) x_{ji}(\nu);
				\end{align*}
			}{
				\begin{align*}
					x_{ji}(\nu+1) \leftarrow x_{ji}(\nu) + \alpha_j;
				\end{align*}
			}
			
			%$\overline{x}_{ji}(k+1) \leftarrow \frac{k+1}{k+2}
			%\overline{x}_{ji}(k) + \frac{1}{k+2} x_{ji}(k+1);$
	} }
	
	\caption{Local differentially private AIMD (LDP-AIMD) algorithm of agent $i$.}
	\label{algo_DP}
\end{algorithm}

{ We state the following asymptotic upper bound for the communication complexity of the algorithm.
\begin{remark}[Communication complexity]
     For a multi-agent system with $m$ resources, the communication complexity in the worst-case scenario will be $\mathcal{O}(m)$ bits at a time step. Moreover, if the algorithm converges in $K$ time steps, then the communication complexity will be $\mathcal{O}(mK)$ bits.
 \end{remark}}

  Let $\Delta q_{ji}$ denote the sensitivity for partial derivatives of the cost functions of agent $i$ for resource $j$. 
We now present local differential privacy results with noises drawn from the Laplacian and the Gaussian distributions to calculate the scaling factors defined in~\eqref{prob_x2}.
\subsection{Noise drawn from Laplacian distribution}
In this subsection, we obtain the result for Laplace distribution. The agent $i$ adds the noise $d_{ji}(t_k)$ to its partial derivatives of the cost functions to calculate its scaling factors (see \eqref{prob_x2}). The noise is drawn from the Laplacian distribution with probability density function $p_{ji}(z) = \frac{\exp \left(-|z|/(\Delta q_{ji}/\epsilon_{ji}) \right)}{2\Delta q_{ji}/\epsilon_{ji}}$ with location $0$ and scale parameter $\Delta q_{ji}/\epsilon_{ji}$, for $\epsilon_{ji} > 0$. For two resources in the system, agent $i$ obtains $(\epsilon_{1i}+\epsilon_{2i})$-differential privacy guarantee, as in the following result.

\begin{theorem} \label{thm:Laplacian_loc}
Let $ \mathcal{D}_{i}$ be the data universe of the partial derivatives of cost function $f_i$ of agent $i$.
Let $M_{q_{1i}}$ and $M_{q_{2i}}$ denote the privacy mechanisms for resource $1$ and $2$, respectively, of agent $i$. Let $\mathcal{A}_i$ denote the privacy mechanism for coupled resources. Moreover, for two resources in the multi-agent system, let $\eta_{1i}, \eta_{2i} \in \mathbb{R}$ denote the output and $\epsilon_{1i}, \epsilon_{2i} > 0$ denote the privacy loss bounds of the privacy mechanisms $M_{q_{1i}}$ and $M_{q_{2i}}$, respectively.
Additionally, let $d_{1i}$ and $d_{2i}$ represent noise drawn from Laplace distribution with location $0$ and variance $2 (\Delta q_{ji}/\epsilon_{ji})^2$ used in the privacy mechanisms, then the coupled privacy mechanism $\mathcal{A}_i$ is $(\epsilon_{1i} + \epsilon_{2i})$-differentially private.
\end{theorem}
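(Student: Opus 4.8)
The plan is to prove the claim in two stages: first establish that each single-resource Laplacian mechanism $M_{q_{ji}}$ is $\epsilon_{ji}$-locally differentially private in the sense of Definition~\ref{def:Local-DP}, and then combine the two guarantees through an independence argument to obtain the $(\epsilon_{1i}+\epsilon_{2i})$ bound for the coupled mechanism $\mathcal{A}_i$.

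First I would verify the single-mechanism guarantee. Fix a resource $j\in\{1,2\}$ and let $q_{ji}$ be the query returning the partial derivative $f^o_{j,i}$, so that the released value is $\eta_{ji}=f^o_{j,i}+d_{ji}$ with $d_{ji}$ having the Laplace density $p_{ji}(z)=\tfrac{1}{2(\Delta q_{ji}/\epsilon_{ji})}\exp\!\big(-|z|\,\epsilon_{ji}/\Delta q_{ji}\big)$. For the two inputs compared in Definition~\ref{def:Local-DP} (the unprimed derivative $f^o_{j,i}$ versus the primed derivative $f'^o_{j,i}$) I would form the likelihood ratio of the output densities at a common point $\eta_{ji}$,
\[
\frac{p_{ji}\!\big(\eta_{ji}-f^o_{j,i}\big)}{p_{ji}\!\big(\eta_{ji}-f'^o_{j,i}\big)}
=\exp\!\left(\frac{\epsilon_{ji}}{\Delta q_{ji}}\Big(\big|\eta_{ji}-f'^o_{j,i}\big|-\big|\eta_{ji}-f^o_{j,i}\big|\Big)\right).
\]
By the reverse triangle inequality the bracketed difference is at most $\big|f^o_{j,i}-f'^o_{j,i}\big|$, which is bounded by the sensitivity $\Delta q_{ji}$ of Definition~\ref{def:sensitivity}; hence the ratio is at most $\exp(\epsilon_{ji})$, which is exactly the $\epsilon_{ji}$-local differential privacy condition. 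Since the normalization by $\overline{x}_{ji}$, the factor $\Gamma_j$, and the absolute value appearing in~\eqref{prob_x2} are deterministic post-processing of this released value, they leave the guarantee intact.

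Next I would handle the coupling. Because the noise variables $d_{1i}$ and $d_{2i}$ are drawn independently, the joint output density of $\mathcal{A}_i$ factorizes into the product of the two marginal densities, so the likelihood ratio for the joint output $(\eta_{1i},\eta_{2i})$ splits into the product of the two single-resource ratios bounded above. This yields
\[
\frac{\mathbb{P}\big(\mathcal{A}_i=(\eta_{1i},\eta_{2i})\mid(f^o_{1,i},f^o_{2,i})\big)}{\mathbb{P}\big(\mathcal{A}_i=(\eta_{1i},\eta_{2i})\mid(f'^o_{1,i},f'^o_{2,i})\big)}\le\exp(\epsilon_{1i})\cdot\exp(\epsilon_{2i})=\exp(\epsilon_{1i}+\epsilon_{2i}),
\]
which is the asserted $(\epsilon_{1i}+\epsilon_{2i})$-differential privacy.

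The main obstacle I anticipate is the coupling through the multivariate cost function: the partial derivatives $f^o_{1,i}$ and $f^o_{2,i}$ are not independent queries but are computed from the same function $f_i$ at the same averaged allocation, so a perturbation of $f_i$ may shift both simultaneously. The argument nonetheless goes through because privacy is measured against the independently injected Laplace noise, so the factorization that matters is of the noise densities rather than of the queries. The care required is to confirm that each per-resource sensitivity $\Delta q_{ji}$ controls its own marginal likelihood ratio in isolation; I would address this by bounding each mechanism's guarantee with respect to its own query separately and then invoking basic composition, rather than attempting a joint sensitivity analysis over both resources at once.
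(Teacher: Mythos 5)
Your proposal is correct and follows essentially the same route as the paper's proof: bound each single-resource Laplace mechanism's likelihood ratio by $\exp(\epsilon_{ji})$ via the triangle inequality and the sensitivity $\Delta q_{ji}$, then factor the coupled mechanism's ratio into the product of the two per-resource ratios to get $\exp(\epsilon_{1i}+\epsilon_{2i})$. Your additions --- explicitly invoking independence of $d_{1i}$ and $d_{2i}$ to justify the factorization, and noting that the scaling-factor computation in \eqref{prob_x2} is mere post-processing --- are points the paper leaves implicit, and they tighten rather than change the argument.
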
 
\begin{proof}
It is presented in the Appendix.
\end{proof}

\subsection{Noise drawn from Gaussian distribution}
This subsection presents the results for Gaussian noise added to an agent's partial derivative of its cost function to calculate its scaling factor (see \eqref{prob_x2}). If the noise $d_{ji}(t_k)$ is drawn from the Gaussian distribution with probability density function $p_{ji}(z) = \frac{1}{\sqrt{2\pi \sigma_{ji}}} \exp{(-\frac{z^2}{2 \sigma_{ji}^2})}$ with zero-mean and standard deviation $\sigma_{ji} \geq \frac{\Delta q_{ji}}{\epsilon_{ji}} \sqrt{2 (\ln \frac{1.25}{\delta_{ji}})}$; agent $i$ obtains $(\epsilon_{1i}+\epsilon_{2i}, \delta_{1i}+\delta_{2i})$-differential privacy guarantee, as we obtain in the following result.
\begin{theorem} \label{thm:Gaussian_loc}
Let $\mathcal{D}_{i}$ be data-sets of the partial derivatives of cost function $f_i$ of agent $i$.
Let $M_{q_{1i}}$ and $M_{q_{2i}}$ denote the privacy mechanisms for resource $1$ and $2$, respectively, of agent $i$. Let $\mathcal{A}_i$ denote the privacy mechanism for coupled resources. Moreover, for two resources in the multi-agent system, let $\eta_{1i}, \eta_{2i} \in \mathbb{R}$ denote the output and $\epsilon_{1i}, \epsilon_{2i} > 0$ denote the privacy loss bounds of the privacy mechanisms $M_{q_{1i}}$ and $M_{q_{2i}}$, respectively.
Additionally, let $d_{1i}$ and $d_{2i}$ be noise drawn from Gaussian distribution with mean zero and standard deviation $\sigma_{1i}$ and $\sigma_{2i}$, respectively, used in the privacy mechanisms, then the coupled privacy mechanism $\mathcal{A}_i$ is $(\epsilon_{1i}+\epsilon_{2i},\delta_{1i} + \delta_{2i})$-differentially private.
\end{theorem}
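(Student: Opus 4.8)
The plan is to prove this in two stages, matching the natural structure of a coupled approximate-differential-privacy guarantee: first establish that each single-resource mechanism $M_{q_{1i}}$ and $M_{q_{2i}}$ individually satisfies $(\epsilon_{ji}, \delta_{ji})$-local differential privacy via the Gaussian mechanism, and then combine the two via sequential composition to obtain the additive $(\epsilon_{1i}+\epsilon_{2i}, \delta_{1i}+\delta_{2i})$ bound. This parallels the argument for Theorem~\ref{thm:Laplacian_loc}, with the key difference that the Gaussian mechanism yields an approximate rather than a pure privacy guarantee, which is exactly what forces the $\delta$-terms to appear.

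First I would analyze a single mechanism, say $M_{q_{1i}}$ for resource $1$. The mechanism releases the partial derivative $f^o_{1,i}$ perturbed by Gaussian noise $d_{1i}$ with zero mean and standard deviation $\sigma_{1i}$. Fixing two inputs differing by at most the sensitivity $\Delta q_{1i}$ of Definition~\ref{def:sensitivity}, I would write the privacy-loss random variable as the log-ratio of the two output densities. For the Gaussian density this log-ratio simplifies to an affine function of the noise realization, so the privacy loss is itself a Gaussian random variable. The heart of the argument is then a tail estimate: I must show that with probability at least $1-\delta_{1i}$ over the draw of $d_{1i}$, this privacy loss stays below $\epsilon_{1i}$. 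The hypothesis $\sigma_{1i} \geq \frac{\Delta q_{1i}}{\epsilon_{1i}}\sqrt{2\ln(1.25/\delta_{1i})}$ is precisely the threshold on the standard deviation that drives this tail probability down to at most $\delta_{1i}$. The identical computation shows $M_{q_{2i}}$ is $(\epsilon_{2i}, \delta_{2i})$-differentially private.

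The second stage is to invoke the basic (sequential) composition theorem. Since the coupled mechanism $\mathcal{A}_i$ releases the outputs of $M_{q_{1i}}$ and $M_{q_{2i}}$ from independently drawn noise $d_{1i}$ and $d_{2i}$, the joint output probability factors over the two mechanisms, and applying the two single-mechanism guarantees in turn lets the multiplicative budgets $\exp(\epsilon_{1i})$ and $\exp(\epsilon_{2i})$ multiply while the additive failure probabilities $\delta_{1i}$ and $\delta_{2i}$ collect additively. This yields that $\mathcal{A}_i$ is $(\epsilon_{1i}+\epsilon_{2i}, \delta_{1i}+\delta_{2i})$-differentially private, as claimed.

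I expect the main obstacle to be the rigorous tail-bound step in the first stage. The subtlety is that the privacy loss is only bounded by $\epsilon_{ji}$ outside a low-probability event, so the output space must be split carefully into a ``good'' region where the pointwise $\exp(\epsilon_{ji})$ inequality holds and a ``bad'' region whose measure is forced to be at most $\delta_{ji}$ by the chosen $\sigma_{ji}$; confirming that the stated threshold yields exactly the $1.25$-factor tail bound is where most of the care is required. A second point to verify is that the $p$-norm sensitivity of Definition~\ref{def:sensitivity} enters the Gaussian threshold correctly, which for the Gaussian mechanism presumes $p=2$ (the $\ell_2$ sensitivity), in contrast to the $\ell_1$ sensitivity used in the Laplacian case of Theorem~\ref{thm:Laplacian_loc}.
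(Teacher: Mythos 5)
Your proposal follows essentially the same route as the paper's own (very terse) proof: the paper establishes single-resource $(\epsilon_{ji},\delta_{ji})$-differential privacy by appealing to the Gaussian-mechanism argument of Theorem A.1 in Dwork--Roth (exactly the privacy-loss tail bound with the $\sqrt{2\ln(1.25/\delta_{ji})}$ threshold you describe), and then combines the two resources via the joint-probability factorization used in the Laplacian theorem, i.e.\ basic composition. Your plan just spells out in detail the two steps the paper cites, including the correct observations about the good/bad output-region split and the $\ell_2$ (rather than $\ell_1$) sensitivity.
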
 

\begin{proof}
The proof sketch is presented in the Appendix.
\end{proof}

{ We make the following remark on adding the noise to the additive increase phase.
\begin{remark}
    We expect similar results when noise is added to the additive increase phase; however, we expect a slower convergence. Furthermore, as our algorithm aims to preserve the privacy of the partial derivatives of the cost functions of agents, we add noise in the multiplicative decrease phase only.
\end{remark}}

\section{Results} \label{exp_results}
In this section, we describe the experimental setup for the LDP-AIMD algorithm and present the results. The results show that with appropriate noise levels, the long-term average allocations reach close to the optimal values with a privacy guarantee on the derivatives of the cost functions of agents. Furthermore, we analyze the trade-off between privacy loss and the algorithm's accuracy---the more noise we add, the better the privacy, but the lesser the accuracy. 
%Finally, experimental results show how much the long-term average allocations deviate from the optimal values when varying noise levels are added.

%\subsection{Setup for a single resource}

\subsection{Setup}
We consider six agents (the number is a random pick) that share two resources with capacities $C_1=5$ and $C_2=6$. The agents initialize their allocations $x_{1i}(0)=x_{2i}(0)=0$, for $i=1,2,\ldots,6$. We chose the additive increase factors $\alpha_1 = 0.01$ and $\alpha_2 = 0.0125$, the multiplicative decrease factors $\beta_1 = 0.70$ and $\beta_2 = 0.6$. Additionally, we chose the normalization factors $\Gamma_1 = \Gamma_2 = \frac{1}{1000}$.
To create different cost functions, we use uniformly distributed integer random variables $a_i \in [10,30]$, $b_i \in [15,35]$, $h_i \in [10,20]$, and $g_i \in [15,25]$, for $i=1,2,\ldots,6$. 
The cost functions are listed  as follows:	
%	\begin{strip}
\begin{align} \label{cost_fn}
	& f_{i}(x_{1i}, x_{2i}) = 
	\begin{cases}
		(i) \frac{1}{2} a_i x_{1i}^2 + \frac{1}{4} b_i x_{1i}^4 +  \frac{1}{2} b_i x_{2i}^2  +  \frac{1}{4} a_i x_{2i}^4 \\
		(ii)  \frac{1}{2} b_i x_{1i}^2 +  \frac{1}{4} b_i x_{2i}^2 \\
		(iii) \frac{1}{2} b_i x_{1i}^4 + \frac{1}{3} b_i x_{2i}^4. \\
	\end{cases}
\end{align}
%	\end{strip}
Agents $1$ and $2$ demand the resources using cost function as in \eqref{cost_fn}(i); analogously, agents $3$ and $4$ demand the resources using cost function as in \eqref{cost_fn}(ii), and agents $5$ and $6$ demand the resources using cost function as in \eqref{cost_fn}(iii). 

Recall that, for agent $i$, $\Delta q_{1i}$ and $\Delta q_{2i}$ represent the sensitivity of the partial derivatives of cost functions with respect to resource $1$ and resource $2$, respectively. 
For a fixed time step $t_k$, let $k_1$ capacity events of resource $1$ and $k_2$ capacity events of resource $2$ occur until $t_k$. We add the same level of noise for all agents sharing a resource; to do so, we calculate the maximum sensitivity until time step $t_k$ for resource $1$ as $\Delta q_{1}(t_k) = \max \{ \Delta q_{11}(t_k), \ldots, \Delta q_{1n}(t_k) \}$. Analogously, for resource $2$, as $\Delta q_{2} (t_k)= \max \{ \Delta q_{21}(t_k), \ldots, \Delta q_{2n}(t_k) \}$. 
% that is, we chose $\sigma_{1i} = \sigma_{1u} = \sigma_{1}$, for $i,u =  1,2,\ldots,n$.
%
\subsection{Main results} 
We now present experimental results to check the efficacy of the LDP-AIMD algorithm. Moreover, we present the results with Gaussian noise added to the partial derivatives of the cost functions of agents the agents obtain $(\epsilon_{1i}+\epsilon_{2i}, \delta_{1i}+\delta_{2i})$-differential privacy guarantee, and we also present the results with Laplacian noise added to the partial derivatives of the cost functions of agents, the agents obtain $(\epsilon_{1i}+\epsilon_{2i})$-differential privacy guarantee and agents' allocations on long-term averages are close to optimal values. 
%
%{ The results show that with appropriate noise level, the long-term average allocations reach close to the optimal values with a privacy guarantee on the derivatives of the cost functions of agents.}\marginpar{added, earlier.}

%We present the evolution of the average allocations of resources and also the evolution of the partial derivatives of the cost functions.

%
\subsubsection{\bf{With Gaussian noise:}}
This subsection presents the results obtained by adding Gaussian noise with mean zero and standard deviation $\sigma_{ji}$ to the partial derivatives of the cost functions of agents, $j=1,2$. In this case, the sensitivity is calculated as the $l_2$-norm of the consecutive partial derivatives of cost functions of agents. 
Furthermore, the standard deviation for agent $i$ is calculated as $\sigma_{1i} = \frac{\Delta q_{1i}}{\epsilon_{1i}} \sqrt{2 (\ln \frac{1.25}{\delta_{1i}})}$ for resource $1$; similarly, standard deviation for agent $i$ for resource $2$ is calculated as $\sigma_{2i} = \frac{\Delta q_{2i}}{\epsilon_{2i}} \sqrt{2 (\ln \frac{1.25}{\delta_{2i}})}$.  For a fixed $\delta_{1i} = \delta_{1u} = \delta_1$ and $\sigma_{1i} = \sigma_{1u} = \sigma_{1}$, and the sensitivity $\Delta q_1 = \max \{ \Delta q_{11}, \ldots, \Delta q_{1n}\}$, all agents have the same privacy loss values, $\epsilon_{1i} = \epsilon_{1u} = \epsilon_1$, for $i,u=1,2,\ldots,6$. Similarly, for a fixed $\delta_{2i} = \delta_{2u} = \delta_2$ and $\sigma_{2i} = \sigma_{2u} = \sigma_{2}$, and the sensitivity $\Delta q_2 = \max \{ \Delta q_{21}, \ldots, \Delta q_{2n}\}$, we have, $\epsilon_{2i} = \epsilon_{2u}=\epsilon_2$, for $i,u = 1,2,\ldots,6$. In the experiment, the Gaussian noise with mean zero and standard deviation $\sigma_{1}$ is added to the partial derivatives of the cost functions with respect to resource $1$; analogously, Gaussian noise with mean zero and standard deviation $\sigma_{2}$ is added to the partial derivatives of the cost functions with respect to resource $2$. 
 For resource $1$, we chose $\delta_1 = 0.01$ with $l_2$-norm sensitivity $\Delta q_{1} = 1.32$ and standard deviation for Gaussian noise $\sigma_1 = 20.50$ to obtain the privacy loss value $\epsilon_1 = 0.2$. Likewise, for resource $2$, we chose $\delta_2 = 0.01$ with $l_2$-norm sensitivity $\Delta q_{2} = 2.53$ and standard deviation for Gaussian noise $\sigma_2 = 39.31$ to obtain the privacy loss value $\epsilon_2 = 0.2$. 
As the random variable (noise) $d_{ji}(k_j)$ follows Gaussian distribution and $\sigma_{ji}(k_j) = \frac{\Delta q_{ji}(k_j)}{\epsilon_{ji}} \sqrt{2 (\ln \frac{1.25}{\delta_{ji}})}$, the mechanism is $(\epsilon_{1i}+\epsilon_{2i}, \delta_{1i}+\delta_{2i})$-local differentially private (See Theorem 1, Section 4).
%We use this result to verify our experimental results in this paper. 
The initial few partial derivatives are not considered while calculating the maximum sensitivities because they depend on the initialization states. 
\begin{figure}[ht]
	\centering
	\subfloat[]{%
		\includegraphics[width=0.9\linewidth]{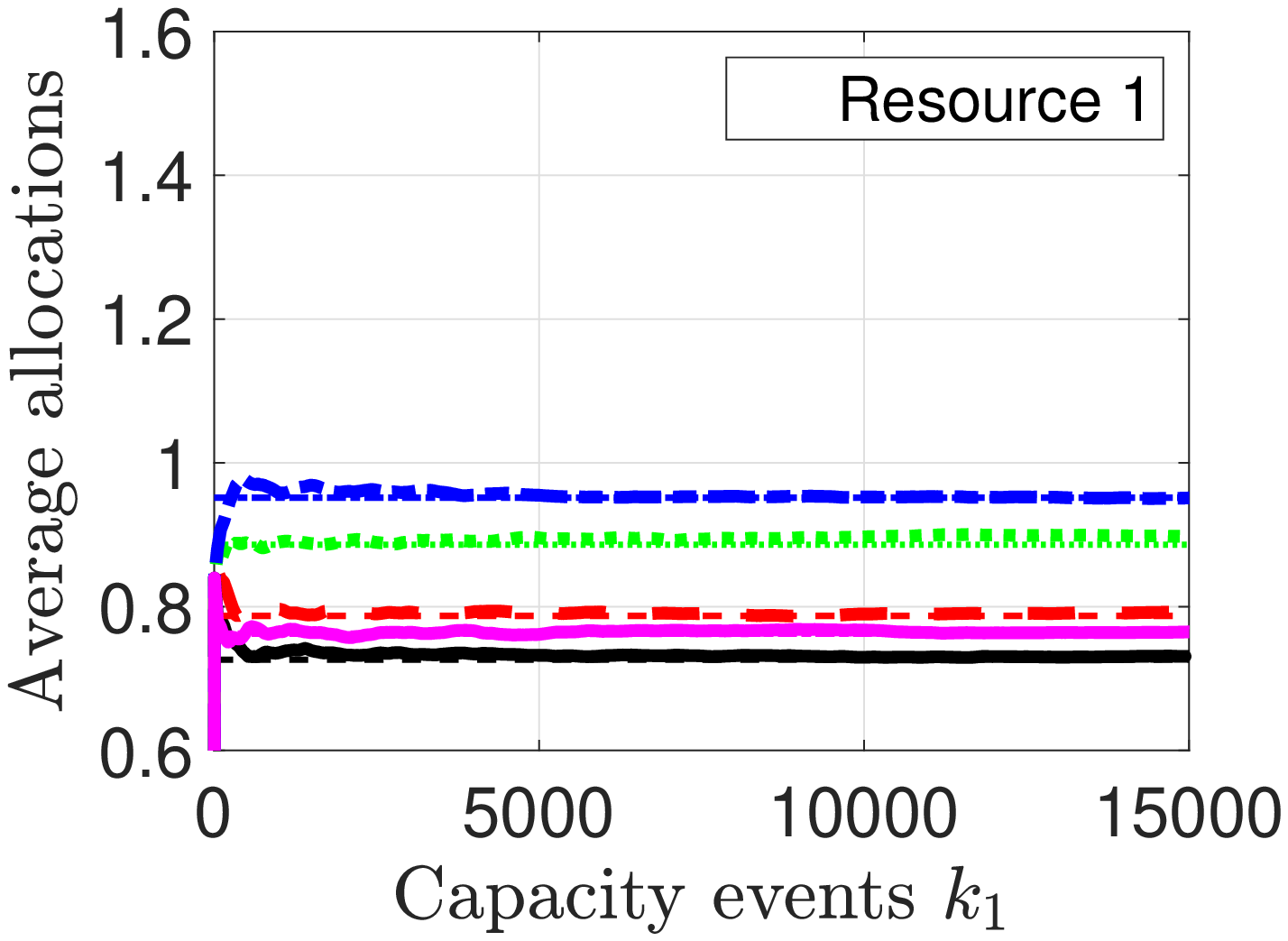}}		\label{avg_r1}\hfill
	\subfloat[]{%
		\includegraphics[width=0.9\linewidth]{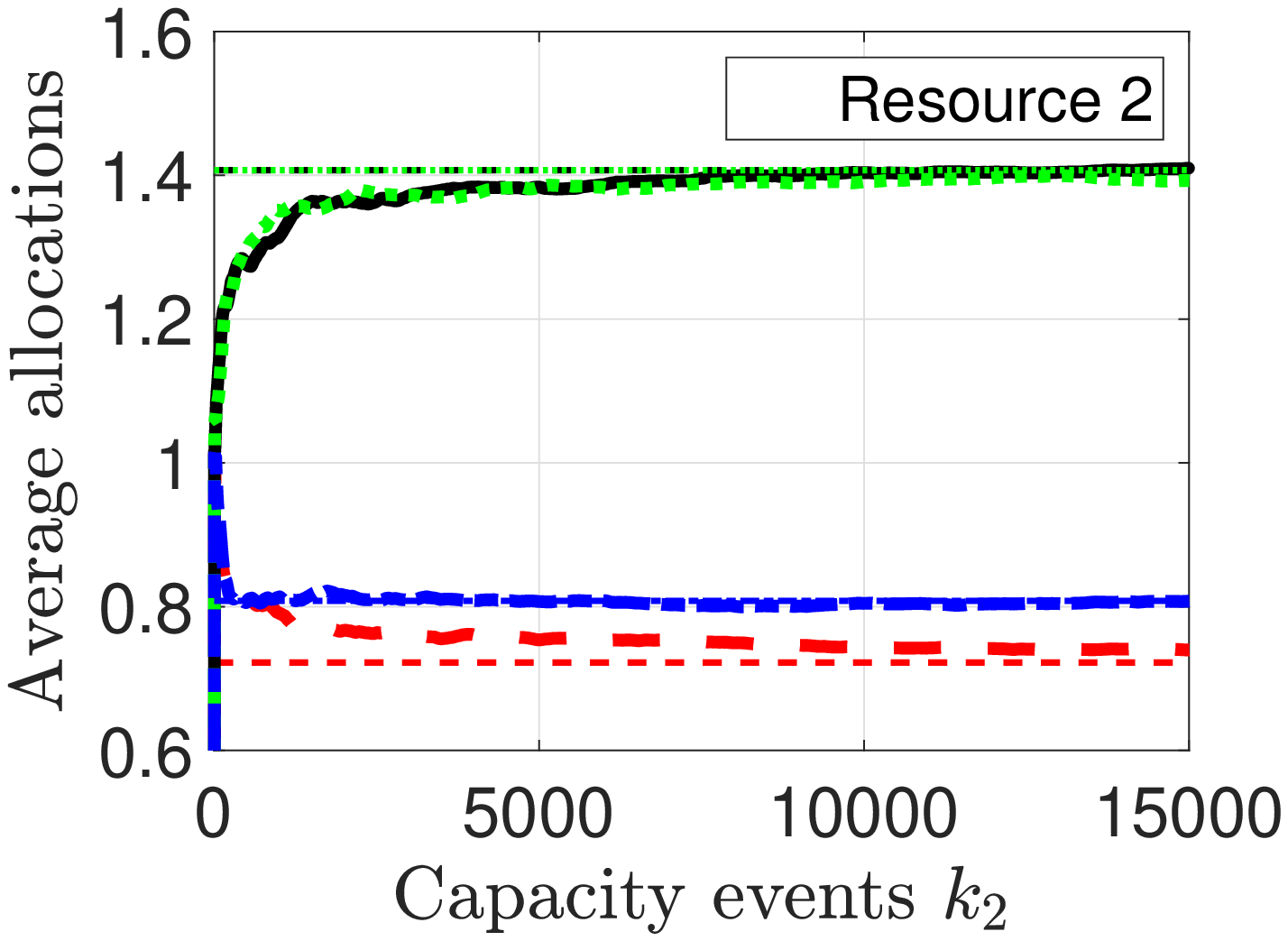}}
	\label{avg_r2} \hfill
\caption{Evolution of average allocations for selected agents (a) Gaussian noise with mean zero and standard deviation $20.5$ added to the partial derivatives of the cost functions, for resource $1$, (b) Gaussian noise with mean zero and standard deviation $39.31$ added to the partial derivatives of the cost functions for resource $2$. Dashed straight lines are plotted with the optimal values (obtained by the solver).}
	\label{fig1} 
\end{figure}

Figures \ref{fig1} and \ref{fig2} are based on Gaussian noise with mean zero and standard deviations $\sigma_1 = 20.50$ and $\sigma_2 = 39.31$, for resources $1$ and $2$, respectively; the noise is added to the partial derivatives of the cost functions of agents at each time step. Figure \ref{fig1} illustrates the evolution of average allocations of resources for a few selected agents. We observe in Figure \ref{fig1} that the average allocations of resources reach close to the optimal values over time with the chosen privacy losses and the sensitivities; the noise is drawn from Gaussian distribution. To compare the results for optimal values, we solved the optimization Problem using a solver; the optimal values by the solver are denoted by the dashed straight lines of the respective colors in Figure \ref{fig1}. 
%It is calculated with sensitivity value $\Delta q_1 = ...$ privacy loss values $\epsilon=....$ and $\delta=...$; similarly, with sensitivity value $\Delta q_2 = ...$. 
%Furthermore, Figure \ref{fig2} presents the evolution of partial derivatives of the cost functions, with respect to a resource, as shaded error bars.
 Figure \ref{fig2} presents the evolution of partial derivatives of the cost functions with respect to a resource as shaded error bars. We observe that partial derivatives gather closer to each other over time; that is, error values decrease over time. Thus, the partial derivatives of the cost function with respect to a resource make a consensus over time. Hence, the long-term average allocations are close to optimal values with chosen privacy losses and sensitivities.
\begin{figure}[!ht]
	\centering
	\subfloat[]{%
		\includegraphics[width=0.9\linewidth]{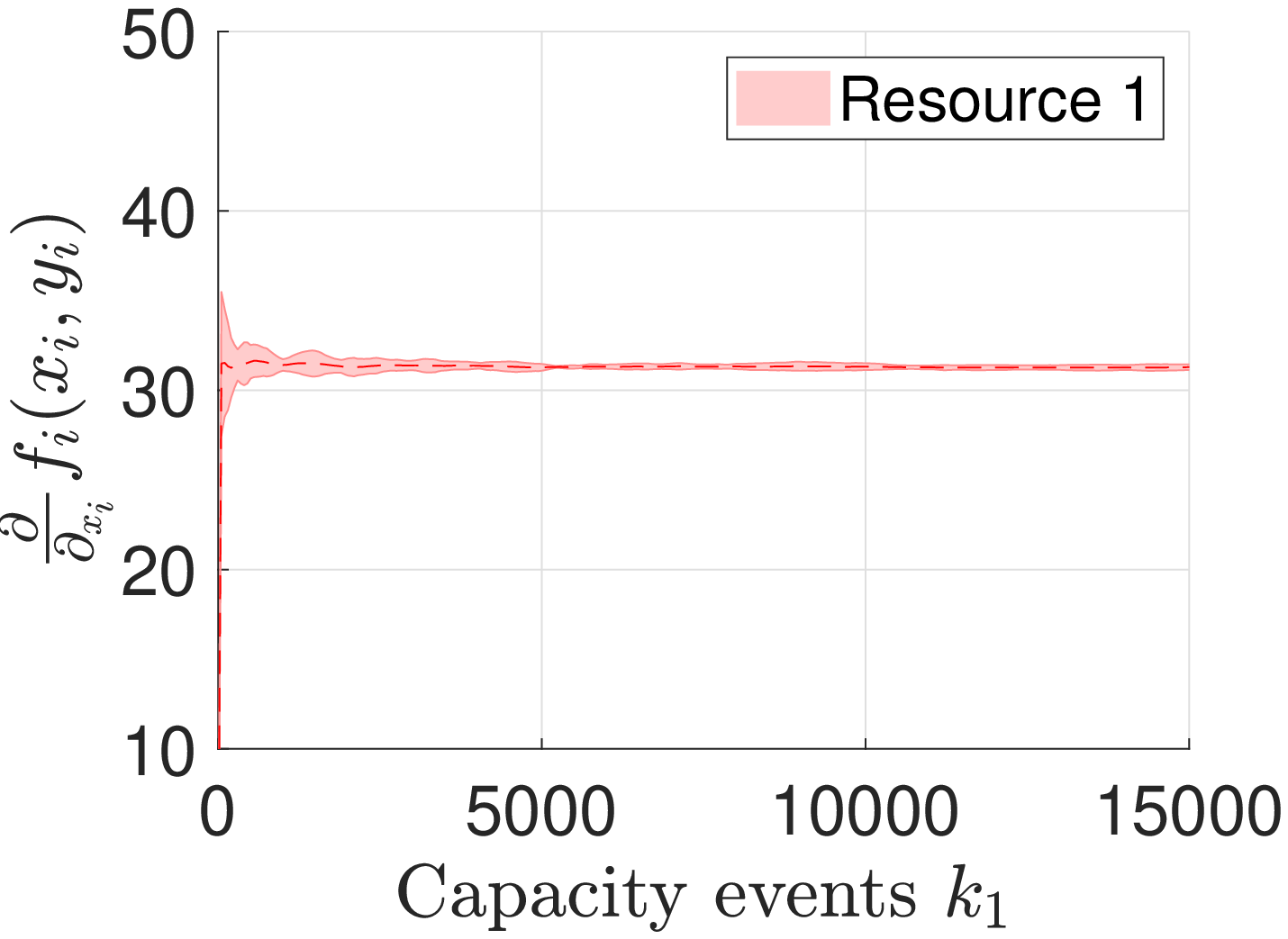}}
	\label{gradx}\hfill
	\subfloat[]{%
		\includegraphics[width=0.9\linewidth]{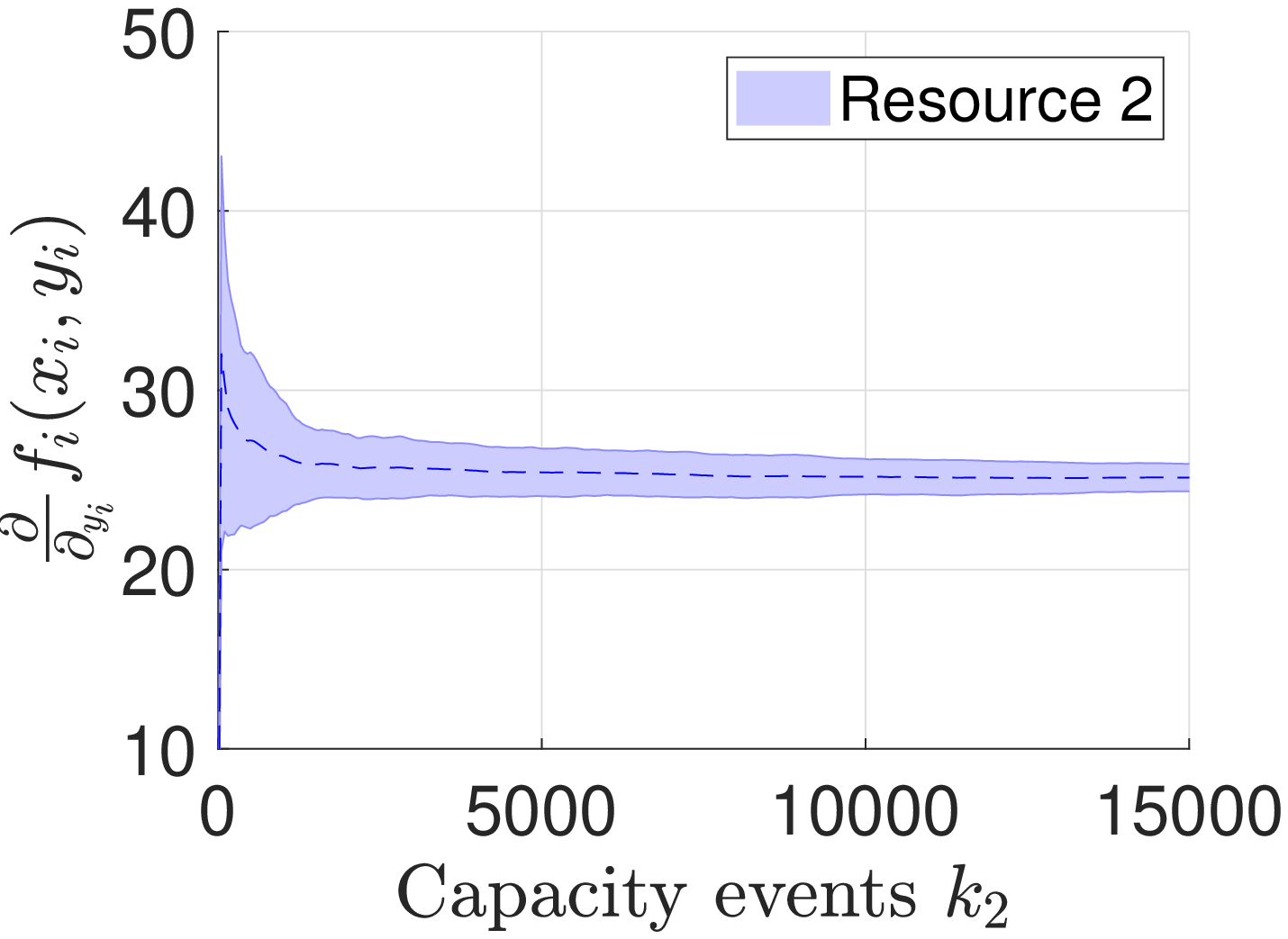}}
	\label{grady} \hfill
	
\caption{(a) Evolution of derivatives of cost functions with respect to resource $1$ of all agents, (b) evolution of derivatives of cost functions with respect to resource $2$ of all agents, for a single simulation. The noise is drawn from the Gaussian distribution.} \label{fig2} 
\end{figure}

\begin{figure}[!ht]
	\centering
	\subfloat[]{%
		\includegraphics[width=0.495\linewidth]{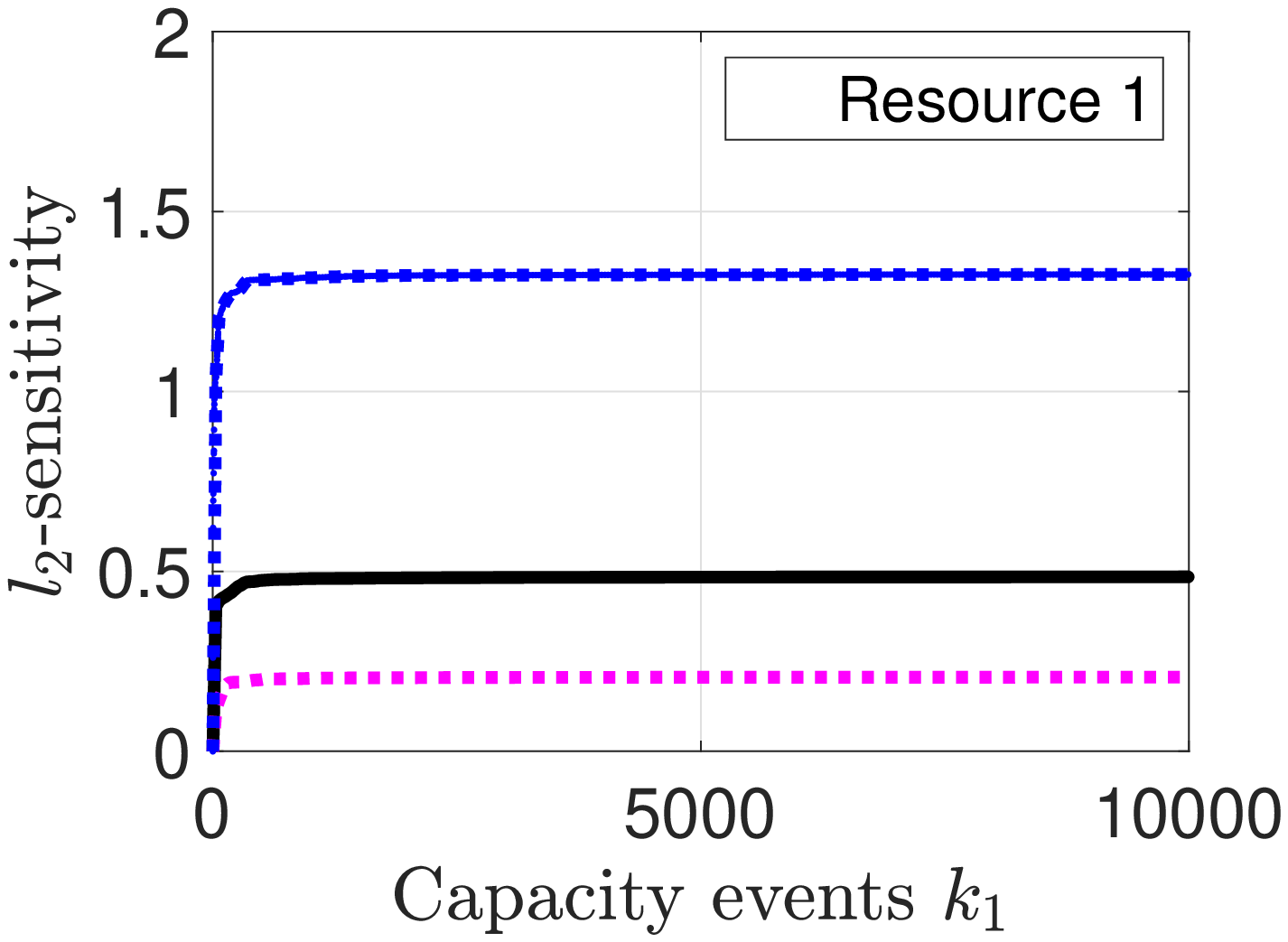}}
	\label{gradx}\hfill
	\subfloat[]{%
		\includegraphics[width=0.495\linewidth]{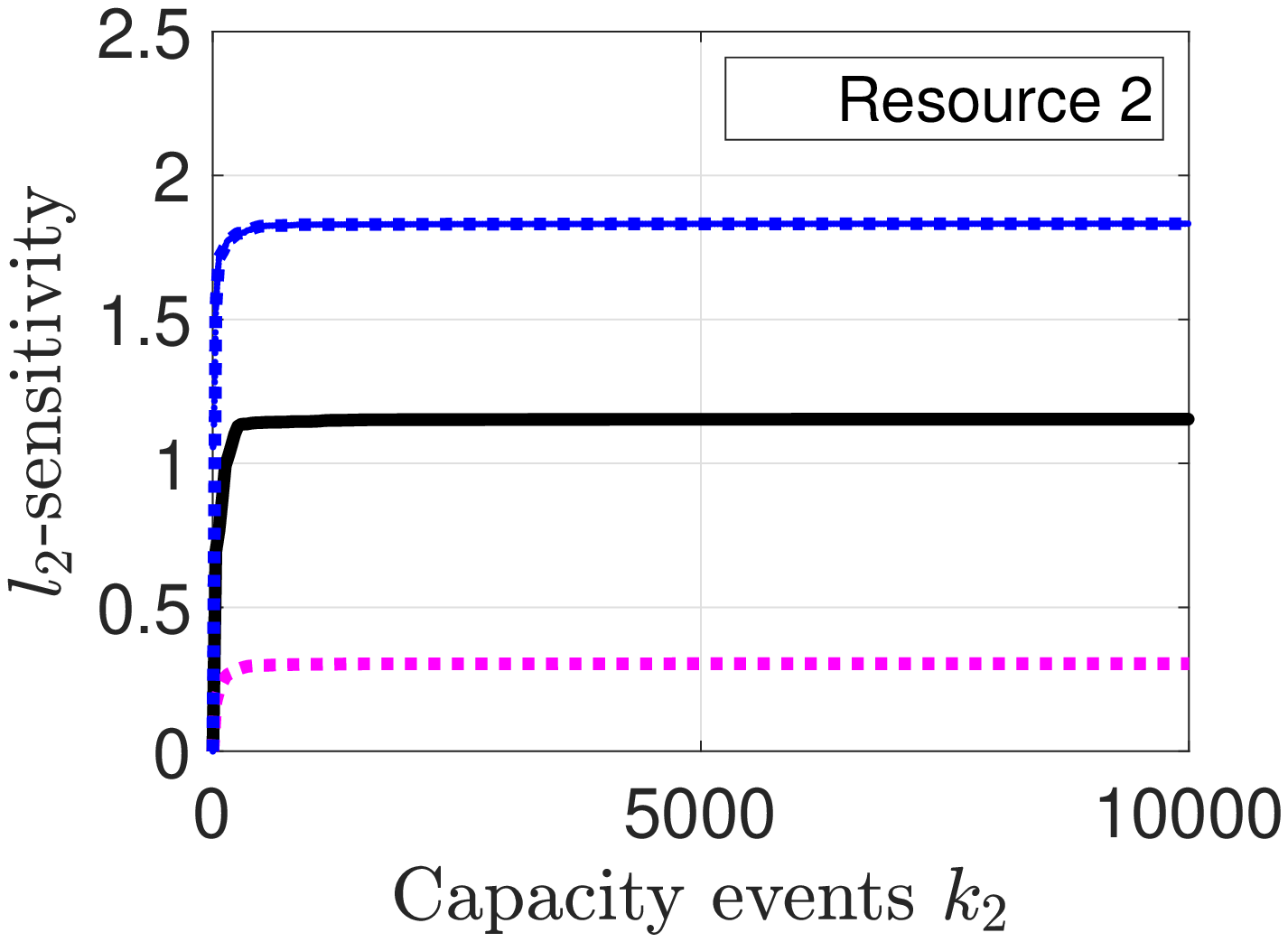}}
	\label{grady} \hfill
	
\caption{(a) Evolution of $l_2$-sensitivity of resource $1$, (b) evolution of $l_2$-sensitivity of resource $2$.} \label{fig:l2-sensiv} 
\end{figure}
\begin{figure}[!ht] 
	\centering
	\subfloat[]{%
		\includegraphics[width=1\linewidth]{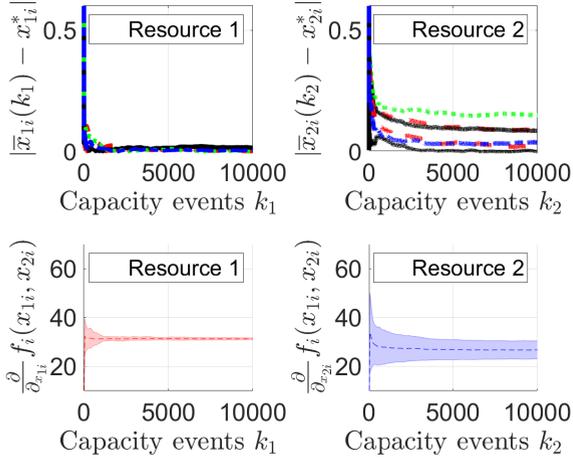}}
	\label{avg_gradsN144}\hfill
		\subfloat[]{%
		\includegraphics[width=1\linewidth]{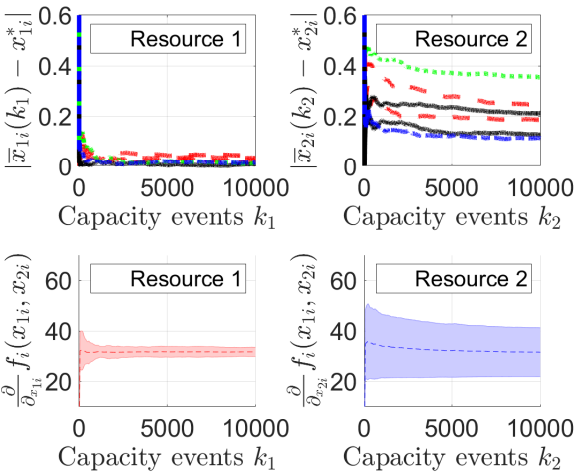}}
	\label{avg_gradsN720}\hfill
	\caption{ Evolution of absolute difference of the average allocations and the optimal allocations of selected agents, and the evolution of partial derivatives of cost functions of all agents for a single simulation with different amounts of Gaussian noise added to the partial derivatives of cost functions of agents: 
	(a) with mean zero and standard deviations $\sigma_1 = 50$, $\sigma_2 = 70$ and (b) with mean zero and standard deviations $\sigma_1 = 70$, $\sigma_2 = 110$.} \label{fig3} 
\end{figure}
\begin{figure}[!ht]
	\centering
\subfloat[]{%
			\includegraphics[width=1\linewidth]{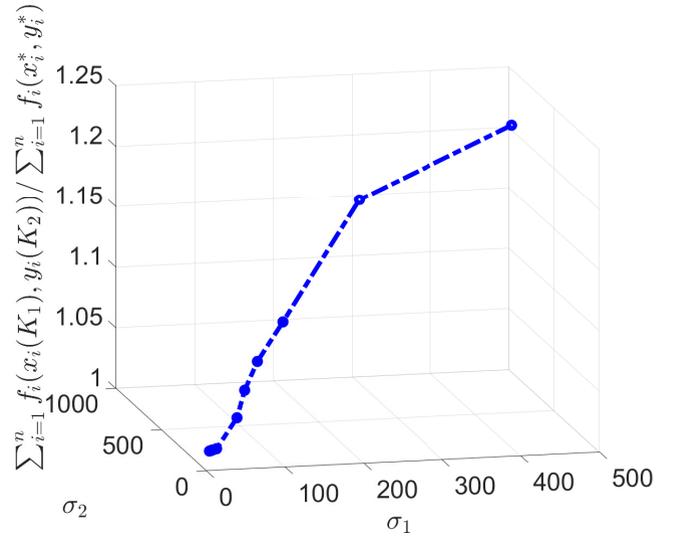}}\\
\subfloat[]{%
		\includegraphics[width=1\linewidth]{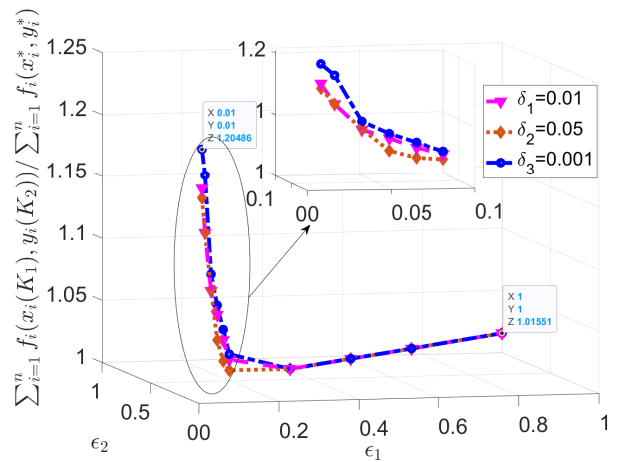}} \hfill
	\caption{The ratio of the total cost obtained by the LDP-AIMD algorithm at the last capacity events and the total optimal cost, (a) evaluated with different standard deviations $\sigma_1$ and $\sigma_2$, (b) evaluated at different values of privacy parameters $\epsilon_1$ and $\epsilon_2$. Here, $K_1$ denotes the experiment's last capacity event for resource $1$, and  $K_2$ denotes the last capacity event for resource $2$.}
	\label{fig4} 
\end{figure}

%Moreover, recall that the smaller $\epsilon$ values provide higher privacy but lesser accuracy, whereas the larger $\epsilon$ values provide lesser privacy but higher accuracy. A larger standard deviation of the normal distribution provides higher privacy but lesser accuracy.
%Figures \ref{fig3} and \ref{fig4} support this claim. To plot these figures, we use different values of privacy budgets.
%

Figure \ref{fig3} illustrates the evolution of the average allocation of resources and the evolution of shaded error bars of partial derivatives of the cost functions of agents with added Gaussian noise of different standard deviations. Note that the larger the standard deviation, the larger the Gaussian noise. Moreover, the absolute difference between long-term average allocations and optimal allocations increases when the standard deviation increases. Additionally, the errors in the shaded error bars of partial derivatives of the cost functions also increase. Thus, the long-term average allocations go farther from the optimal values as noise increases.

Over time the difference between the consecutive partial derivatives of the cost functions decreases; in Figure \ref{fig:l2-sensiv}, we observe that the $l_2$-sensitivities of resources converge over time. 
Moreover, Figure \ref{fig4}(a) shows the ratio of the total cost obtained by the LDP-AIMD algorithm at the last capacity event of the experiment and the total optimal cost evaluated with different standard deviations for several experiments. When the standard deviation increases, thus the noise, then the ratio of the total costs increases. Hence, privacy increases, but accuracy decreases.

The dependence on privacy parameters $\epsilon_1, \epsilon_2$ and $\delta_1, \delta_2$ and the ratio of the total cost obtained by the algorithm at the last capacity events of the experiment and the total optimal cost are illustrated in Figure \ref{fig4}(b). We observe that keeping $\delta_1$ fixed when $\epsilon_1$ is small (better privacy) (analogously, for $\delta_2$ and $\epsilon_2$), the ratio of the costs is larger; hence, lesser accuracy is provided. Furthermore, after a particular value of $\epsilon_1$ and $\epsilon_2$, the ratio of the costs does not change much when the value of $\epsilon_1$ and $\epsilon_2$ increase. We observe that the values of $\epsilon_1$ and $\epsilon_2$ greater than $0.1$ change the ratio of the costs very little. Additionally, similar to $\epsilon_1$ and $\epsilon_2$, when the values of $\delta_1$ and $\delta_2$ are small, the ratio of the costs is large. However, after a particular threshold, the ratio of the costs does not change much when the values of $\delta_1, \delta_2$ increase. 
\begin{figure}[!ht]
	\centering
	\subfloat[]{%
		\includegraphics[width=0.495\linewidth]{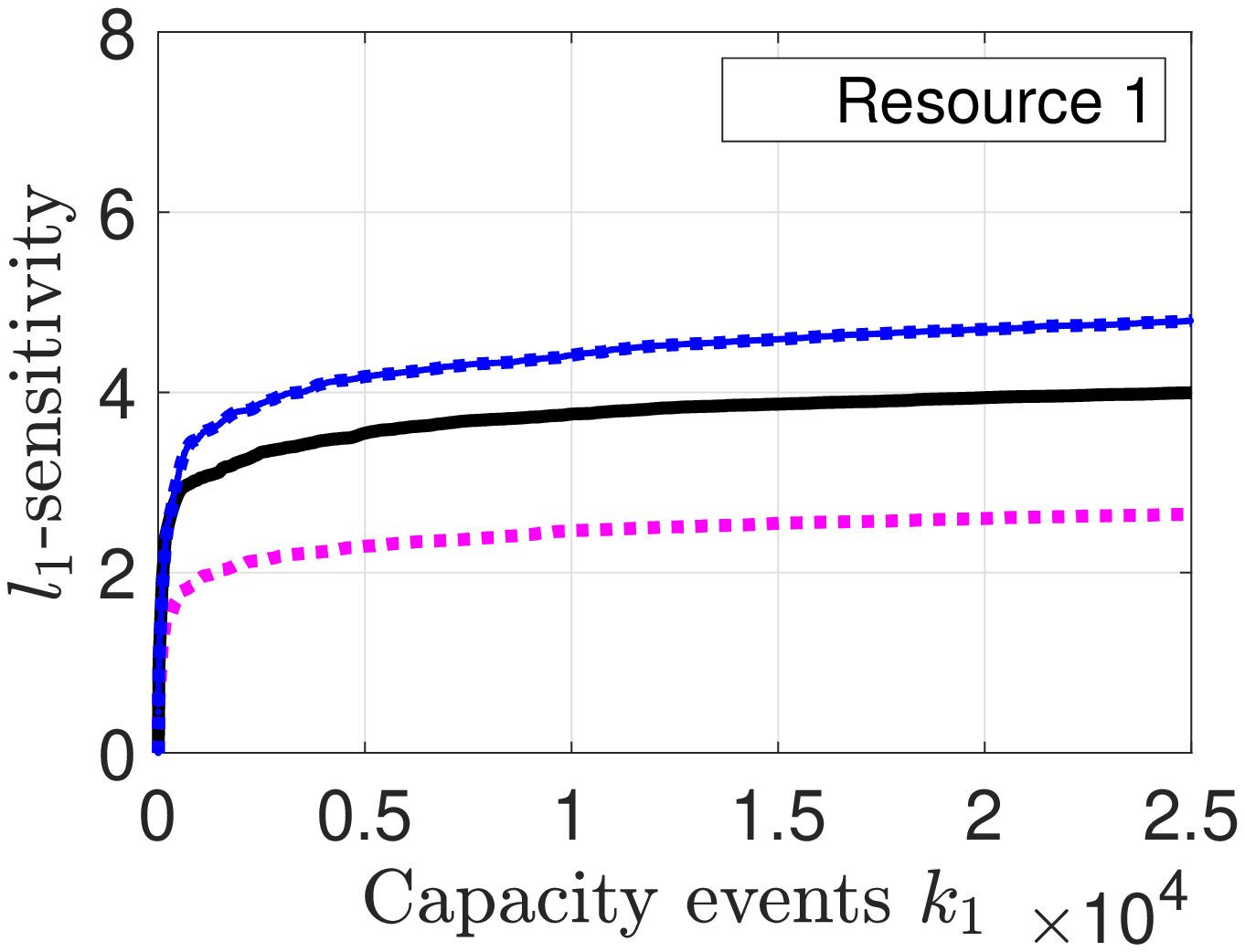}}
	\label{gradx}\hfill
	\subfloat[]{%
		\includegraphics[width=0.495\linewidth]{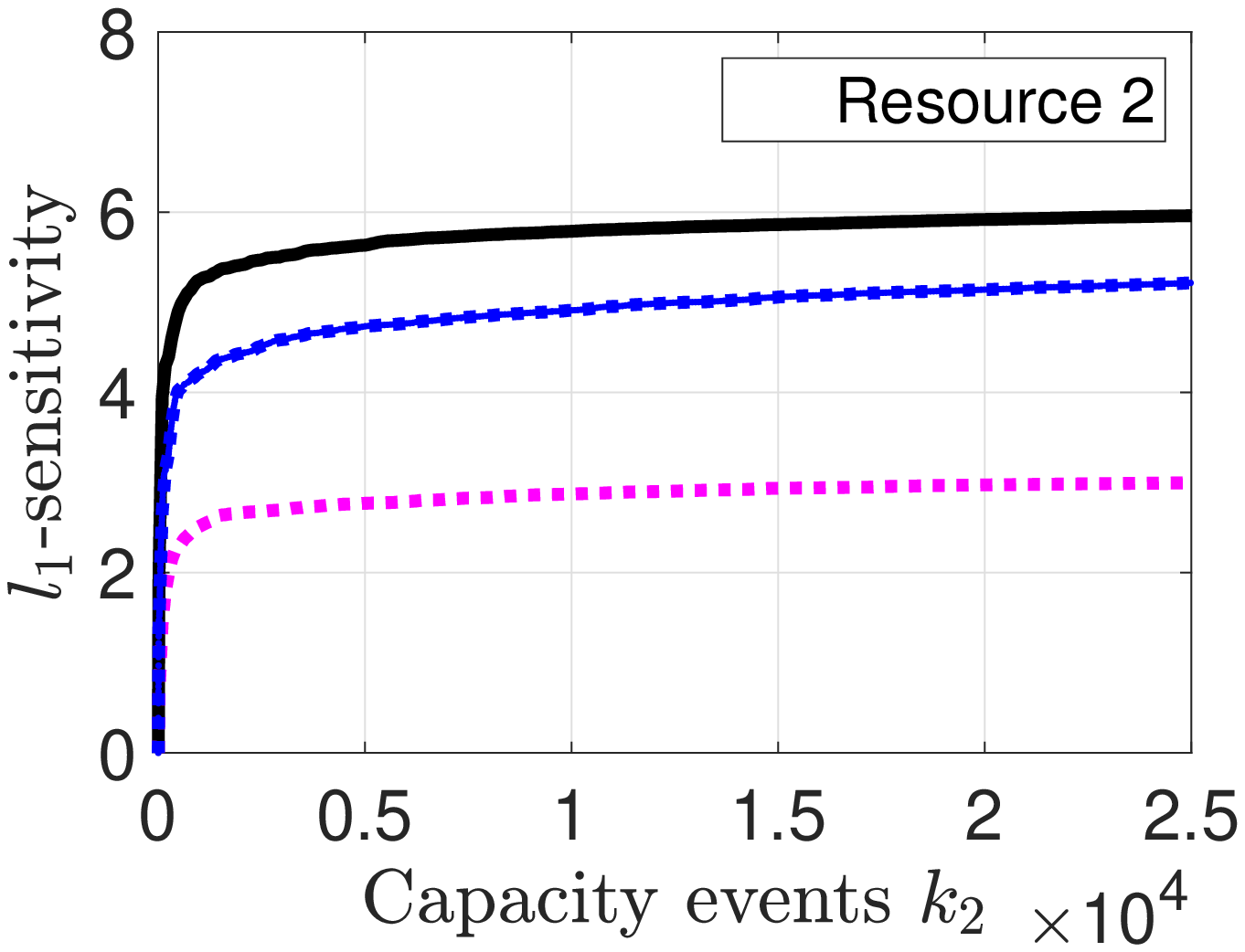}}
	\label{grady} \hfill
	
\caption{(a) Evolution of $l_1$-sensitivity of resource $1$, (b) evolution of $l_1$-sensitivity of resource $2$.} \label{fig:l1-sensiv} 
\end{figure}

\begin{figure}[!ht]
	\centering
	\includegraphics[width=1.1\linewidth]{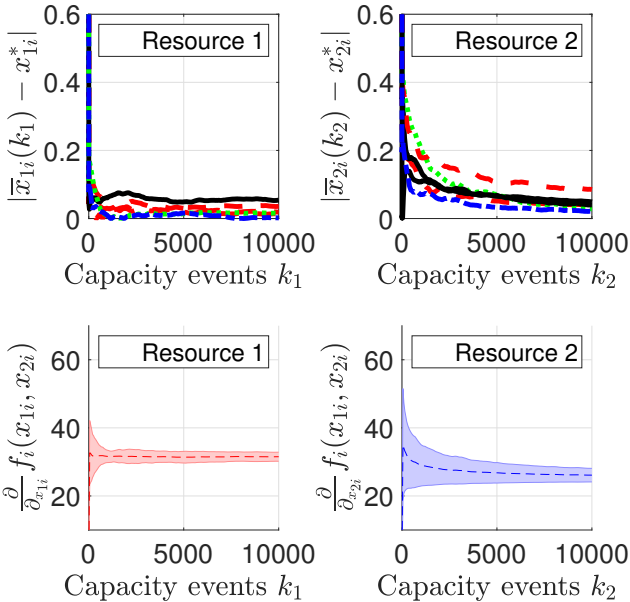}
	\label{gradx}\hfill
\caption{Evolution of absolute difference of the average allocations and the optimal allocations of selected agents, and the evolution of partial derivatives of cost functions with Laplacian noise, $l_1$-sensitivities $\Delta_1 = 5.9$ and $\Delta_2 =  6.34$ and privacy budgets $\epsilon_1= \epsilon_2 = 0.1$.} \label{fig:Lap_abs-diff} 
\end{figure}

\subsubsection{\bf{With Laplacian noise:}}
This subsection presents the results for $(\epsilon_{1i}+\epsilon_{2i})$-local differential privacy (See Theorem 2, Section 4) obtained by adding Laplacian noise to the partial derivatives of the cost functions of agents. In this case, the $l_1$-norm of the consecutive partial derivatives of cost functions of agents is calculated.
Over time the difference between the consecutive partial derivatives of the cost functions decreases as illustrated in Figure \ref{fig:l1-sensiv}; that is, the $l_1$-sensitivities of resources converge over time.
Figure \ref{fig:Lap_abs-diff} illustrates the evolution of the average allocation of resources and the evolution of shaded error bars of partial derivatives of the cost functions of agents with added Laplacian noise. Similar to the Gaussian case, the absolute difference between long-term average allocations and optimal allocations increases when the standard deviation increases. Additionally, the errors in the shaded error bars of partial derivatives of the cost functions also increase---the long-term average allocations go farther from the optimal values as noise increases.

\begin{figure}[!ht]
	\centering
		\includegraphics[width=0.9\linewidth]{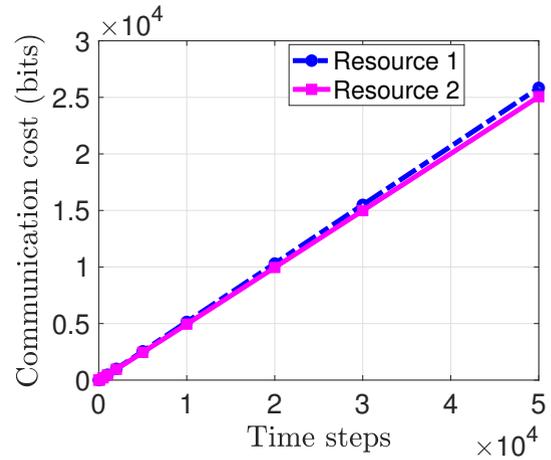}
\hfill	
\caption{{ Evolution of the communication cost (bits) over time steps with Laplacian noise, $l_1$-sensitivities $\Delta_1 = 5.9$ and $\Delta_2 =  6.34$ and privacy budgets $\epsilon_1= \epsilon_2 = 0.1$.}} \label{fig:comm_cost} 
\end{figure}
{ The evolution of the communication cost in bits with the Laplacian noise is illustrated in Figure \ref{fig:comm_cost}; it is the number of capacity events over time steps. We observe that the communication cost increases (approximately) linearly as the time steps increase. It is in line with our observation that the higher the time to convergence, the higher the communication cost.}

\section{Related work} \label{sec:background}
This section briefly presents the related work on distributed resource allocation, distributed optimization, federated optimization, and differential privacy.

\subsection{Resource allocation} 
Divisible resource allocations are studied by Ghodsi et al. \cite{Ghodsi2011}; they propose a dominant resource fairness algorithm for multiple divisible resource allocations. Recently, Nguyen et al. \cite{Nguyen2019} proposed a market-based algorithm to allocate multiple resources in fog computing. Furthermore, Fossati et al. \cite{Fossati2020} proposed a multi-resource allocation for network slicing in the context of {5G} networks. The distributed resource allocation for {5G}-vehicle-to-vehicle communication was proposed by Alperen et al. \cite{Alperen2020}, and an envy-free fair allocation mechanism is proposed in \cite{Ashari2019} for allocating multiple computing resources of servers. { Additionally, interested readers can refer to Shang's work \cite{YShang2019} on a distributed graph-theoretic opinion formation model for social dynamical systems. The model provides resilience to malicious and Byzantine actors in the network.} A survey on the allocation of multiple resources can be found at \cite{Poullie2018}.   
{ An interesting work by Zhou et al. on non-linear dynamic switching heterogeneous multiagent systems is found at \cite{Zou2022}. They propose a leaderless approach to obtain consensus on position and velocity states. Their approach is communication efficient and does not require sharing auxiliary dynamic states to obtain consensus; however, some agent interactions are required. Nonetheless, the major difference between our work and this work \cite{Zou2022} is that \cite{Zou2022}'s approach is leaderless. In contrast, we consider a central server that keeps track of the aggregate consumption of resources and sends notifications when the capacity constraint is reached. Furthermore, our approach deals with differential privacy guarantees.}

\subsection{Distributed optimization} \label{sec:Dist-opt} 

Tsitsiklis and co-authors proposed the distributed optimization problem in their seminal work \cite{Tsitsiklis1986, Tsitsiklis1984}.
To solve a distributed optimization problem, agents in a network need to share their states or multipliers with at least one neighbor, which may compromise the agents' privacy.
Some of the consensus-based distributed optimization approaches are:  Sub-gradient methods \cite{Nedic2009, Romao2021}, Gossip algorithm for distributed averaging \cite{Boyd2006_2, Pu2021}, Dual averaging \cite{Duchi2012, Han2021}, Broadcast-based approach \cite{Nedic2011, Silvestre2019}, to name a few.
Moreover, privacy-preserving distributed optimization has been studied in \cite{Han2021, Huo2022}.
\subsection{Federated optimization} \label{sec:federated_opt}
{\em Federated learning} is a distributed machine learning technique in which several agents (clients) collaborate to train a global model without sharing their local on-device data. Each agent updates the global model with its local data-set and parameters and shares the updates with the central server. The central server aggregates the updates by agents and updates the global model \cite{Mcmahan2017, Konecny2016, Kairouz2021}. 
The optimization techniques are called {\em federated optimization}  \cite{Reddi2021, KonecnyMR2015}.

One of the most popular and widely used federated optimization techniques is FederatedAveraging (FedAvg) by McMahan and co-authors \cite{Mcmahan2017}. It is based on stochastic gradient descent. In the algorithm, a fixed number of agents are selected randomly from agents in the network at each iteration. Each selected agent performs the stochastic gradient descent on its local data using the global model for a certain number of epochs. Moreover, each agent performs the local updates for the same number of epochs. Finally, it communicates the averaged gradients to the central server. The central server then takes a weighted average of the gradients by agents and updates the global model. The process is repeated until the model is trained.

Another federated optimization technique is FedProx \cite{Li2020}, which is a generalization of the FedAvg. In FedProx, the number of epochs is not fixed for each iteration, as in FedAvg; however, it varies, and partial updates by agents are averaged to update the global model. It is proposed for heterogeneous devices and data-sets, data that are not independent and identically distributed (non-IID). In addition, the authors provide convergence guarantees. Sattler and co-authors \cite{Sattler2020} also proposed a federated optimization technique for non-IID data. We list a few other federated optimization techniques such as FedNova \cite{Wang2020}, SCAFFOLD \cite{karimireddy2020}, Overlap-FedAvg \cite{Zhou2022}, federated composite optimization \cite{Yuan2021}, federated adaptive optimization \cite{Reddi2021}. Besides, interested readers can refer to \cite{Kairouz2021}, \cite{Yin2021} and the papers cited therein for a detailed discussion on advances and future research directions in federated optimization and federated learning.

\subsection{Differential privacy}
Han et al. \cite{Han2017} developed a differentially private distributed algorithm to allocate divisible resources. They consider the convex and Lipschitz continuously differentiable cost functions. Moreover, therein noise is added to the constraints of the optimization problem. In another work on differentially private distributed algorithms, Huang et al. \cite{Huang2015} also use convex functions. However, therein noise is added to the cost functions of agents. Olivier et al. \cite{Olivier2020} also developed distributed differentially private algorithms for the optimal allocation of resources. Differential privacy in the context of dynamical systems is studied in \cite{LeNy2014}, \cite{Katewa2018}, and \cite{Ding2018}. Interested readers may refer to the recent books by Le Ny \cite{LeNy2020-book} and Farokhi \cite{Farokhi2020} for further details.
Fioretto et al. \cite{Fioretto2020} developed a differentially private mechanism for Stackelberg games. Duchi et al. developed the local differential privacy mechanism \cite{Duchi2013}. Furthermore, Dobbe et al. \cite{Dobbe2020} proposed a local differential privacy mechanism to solve convex distributed optimization problems. Also, Chen et al. \cite{Chen2021} proposed the local and the shuffle differentially private models.

\section{Conclusion}  \label{conc}
We developed a local differentially private additive-increase and multiplicative-decrease multi-resource allocation algorithm for federated settings wherein inter-agent communication is not required. Nevertheless, a central server aggregates the resource demands by all agents and sends a one-bit notification in the network when a capacity constraint is violated. The algorithm incurs little communication overhead. Furthermore, agents in the network receive close to optimal values with differential privacy guarantees to agents' partial derivatives of the cost functions. We presented numerical results to show the efficacy of the developed algorithms. Also, an analysis is presented on different privacy parameters and the algorithm's accuracy. { The communication complexity of the algorithm is independent of the number of agents in the network. For $m$ shared resources in the network, in the worst-case scenario, $m$ bits are required to communicate the capacity constraint notification.} The work can be extended to other privacy models, for example, the shuffle model. { It will be interesting to show the theoretical results on convergence and the rate of convergence of the LDP-AIMD algorithm.} It will also be interesting to implement the developed algorithm in real-world applications.

%% The Appendices part is started with the command \appendix;
%% appendix sections are then done as normal sections
 \appendix

 \section{Proofs}
 \label{proofs}
 
 \begin{proof}[Proof of Theorem \ref{thm:Laplacian_loc}]
Let until time instant $t_{k}$, $k_1$ capacity events of resource $1$ and $k_2$ capacity events of resource $2$ occur, respectively. Recall that $f^o_{1,i}(t_{k})$ denotes $\frac{\partial}{\partial{x}} \big \vert_{x=\overline{x}_{1i}(t_{k})} f_{i}(\overline{x}_{1i}(t_{k}), \overline{x}_{2i}(t_{k}))$ and $f'^o_{1,i}(t_{k})$ denotes $\frac{\partial}{\partial{x}} \big \vert_{x=\overline{x}'_{1i}(t_{k})} f_{i}(\overline{x}'_{1i}(t_{k}), \overline{x}'_{2i}(t_{k}))$. Furthermore,  $f^o_{2,i}(t_{k})$ denotes $\frac{\partial}{\partial{x}} \big \vert_{x=\overline{x}_{2i}(t_{k})} f_{i}(\overline{x}_{1i}(t_{k}), \overline{x}_{2i}(t_{k}))$

and $f'^o_{2,i}(t_{k})$ denotes $\frac{\partial}{\partial{x}} \big \vert_{x=\overline{x}'_{2i}(t_{k})} f_{i}(\overline{x}'_{1i}(t_{k}), \overline{x}'_{2i}(t_{k}))$. We obtain the following for resource $1$:
\begin{align}
&\frac{\mathbb{P} \left(M_{q_{1i}} \left (f^o_{1,i}(t_{k})) \right) = \eta_{1i}(t_{k}) \right) }{\mathbb{P} \left(M_{q_{1i}} \left (f'^o_{1,i}(t_{k}) \right) = \eta_{1i}(t_{k}) \right)}  \nonumber \\&= \frac{\mathbb{P} \left(f^o_{1,i}(t_{k}) + d_{1i}(t_{k}) = \eta_{1i}(t_{k}) \right) }{\mathbb{P} \left (f'^o_{1,i}(t_{k}) + d_{1i}(t_{k}) = \eta_{1i}(t_{k}) \right)}. 
\end{align}
Let $k_1$ capacity events for resource $1$ and $k_2$ capacity events for resource $2$ occur until time instant $t_k$. As the noise is drawn from the Laplace distribution and $\norm{f^o_{1,i}(t_{k}) - f'^o_{1,i}(t_{k})}_1$ denotes the $l_1$-sensitivity $\Delta {q_{1i}}(t_k)$, we obtain
\begin{align*}
 &\frac{\mathbb{P} \left(d_{1i}(t_{k}) = \eta_{1i}(t_{k}) - f^o_{1,i}(t_{k}) \right)}{\mathbb{P} \left (d_{1i}(t_{k}) = \eta_{1i}(t_{k}) - f'^o_{1,i}(t_{k}) \right)} \\&= \frac{\exp \left( \frac{\norm{\eta_{1i}(t_{k}) - f^o_{1,i}(t_{k})}_1}{ (\Delta {q_{1i}(t_k)}/\epsilon_{1i})} \right)/ (2\Delta {q_{1i}(t_k)}/\epsilon_{1i})}{\exp \left (\frac{\norm{\eta_{1i}(t_{k}) - f'^o_{1,i}(t_{k})}_1}{ (\Delta {q_{1i}(t_k)}/\epsilon_{1i}} \right) / (2\Delta {q_{1i}(t_k)}/\epsilon_{1i})} \\
 \\&= \frac{\exp \left( \epsilon_{1i}  \frac{\norm{\eta_{1i}(t_{k}) - f^o_{1,i}(t_{k})}_1}{ (\Delta {q_{1i}(t_k)})} \right)}{\exp \left (\epsilon_{1i} \frac{\norm{\eta_{1i}(t_{k}) - f'^o_{1,i}(t_{k})}_1}{(\Delta {q_{1i}(t_k)}} \right)} \\
	& \leq \exp \left( \epsilon_{1i} \frac{\norm{f^o_{1,i}(t_{k}) - f'^o_{1,i}(t_{k})}_1}{\Delta {q_{1i}(t_k)}} \right)\\
	& = \exp \left(\frac{\epsilon_{1i} \Delta q_{1i}({t_k})}{ \Delta {q_{1i}(t_k)}} \right) = \exp(\epsilon_{1i}).
\end{align*}

Similarly, we obtain for resource $2$:
\begin{align}
\frac{\mathbb{P} \left(M_{q_{2i}} \left (f^o_{2,i}(t_{k}) \right) = \eta_{2i}(t_{k}) \right) }{\mathbb{P} \left(M_{q_{2i}} \left (f'^o_{2,i}(t_{k}) \right) = \eta_{2i}(t_{k}) \right)} \leq \exp(\epsilon_{2i}).
\end{align}

We obtained the following result for the coupled privacy mechanism:
\begin{align*}
&\frac{\mathbb{P} \left(\mathcal{A}_{i} \left (f^o_{1,i}(t_{k}), f^o_{2,i}(t_{k}) \right) = (\eta_{1i}(t_{k}), \eta_{2i}(t_{k}))  \right) }{\mathbb{P} \left(\mathcal{A}_{i} \left (f'^o_{1,i}(t_{k}),f'^o_{2,i}(t_{k}) \right) =  ( \eta_{1i}(t_{k}), \eta_{2i}(t_{k})) \right)} \nonumber\\
&=  \frac{\mathbb{P} \left(M_{q_{1i}} \left (f^o_{1,i}(t_{k}) \right) = \eta_{1i}(t_{k})  \right) }{\mathbb{P} \left(M_{q_{1i}} \left (f'^o_{1,i}(t_{k}) \right) = \eta_{1i}(t_{k}) \right)} \\& \times \frac{\mathbb{P} \left(M_{q_{2i}} \left (f^o_{2,i}(t_{k}) \right) = \eta_{2i}(t_{k})  \right) }{\mathbb{P} \left(M_{q_{2i}} \left (f'^o_{2,i}(t_{k}) \right) = \eta_{2i}(t_{k}) \right)} 
\nonumber \\
& \leq \exp(\epsilon_{1i})\times \exp(\epsilon_{2i})
= \exp(\epsilon_{1i} + \epsilon_{2i}). 
\end{align*}

\end{proof}

\begin{proof}[Proof of Theorem \ref{thm:Gaussian_loc}]
For fixed $j$ and $i$, the proof for $(\epsilon_{ji}, \delta_{ji})$-differential privacy is similar to the proof of Theorem A.1 of \cite{Dwork2014}. We then use joint probability for multiple resources and follow steps similar to the proof of Theorem 4.1. 
\end{proof}

%%%%%%%%%%%%%%%%%%%%%%%%%%%%%%%%%%%%%%%%%%%%%%%%%%%%%%%%%%%%%%%%%%%%%%%%

\bibliographystyle{IEEEtran}
\bibliography{DP_AIMDBib}

% Generated by IEEEtran.bst, version: 1.14 (2015/08/26)
\begin{thebibliography}{10}
\providecommand{\url}[1]{#1}
\csname url@samestyle\endcsname
\providecommand{\newblock}{\relax}
\providecommand{\bibinfo}[2]{#2}
\providecommand{\BIBentrySTDinterwordspacing}{\spaceskip=0pt\relax}
\providecommand{\BIBentryALTinterwordstretchfactor}{4}
\providecommand{\BIBentryALTinterwordspacing}{\spaceskip=\fontdimen2\font plus
\BIBentryALTinterwordstretchfactor\fontdimen3\font minus
  \fontdimen4\font\relax}
\providecommand{\BIBforeignlanguage}[2]{{%
\expandafter\ifx\csname l@#1\endcsname\relax
\typeout{** WARNING: IEEEtran.bst: No hyphenation pattern has been}%
\typeout{** loaded for the language `#1'. Using the pattern for}%
\typeout{** the default language instead.}%
\else
\language=\csname l@#1\endcsname
\fi
#2}}
\providecommand{\BIBdecl}{\relax}
\BIBdecl

\bibitem{Reddi2021}
S.~J. Reddi, Z.~Charles, M.~Zaheer, Z.~Garrett, K.~Rush, J.~Konecny, S.~Kumar,
  and H.~B. McMahan, ``Adaptive federated optimization,'' in
  \emph{International Conference on Learning Representations, {ICLR}}, 2021.

\bibitem{KonecnyMR2015}
J.~Konecny, B.~McMahan, and D.~Ramage, ``Federated optimization: Distributed
  optimization beyond the datacenter,'' in \emph{{NIPS} Workshop on
  Optimization for Machine Learning}, 2015.

\bibitem{Mcmahan2017}
B.~McMahan, E.~Moore, D.~Ramage, S.~Hampson, and B.~A. Arcas,
  ``Communication-efficient learning of deep networks from decentralized
  data,'' in \emph{Proceedings of International Conference on Artificial
  Intelligence and Statistics}, vol.~54, 2017, pp. 1273--1282.

\bibitem{Kairouz2021}
P.~Kairouz, H.~B. McMahan, B.~Avent, A.~Bellet, M.~Bennis, A.~N. Bhagoji, K.~A.
  Bonawitz, Z.~Charles, G.~Cormode, R.~Cummings, R.~G.~L. D'Oliveira,
  H.Eichner, S.~E. Rouayheb, D.~Evans, J.~Gardner, Z.~Garrett, A.~Gasc{\'{o}}n,
  B.~Ghazi, P.~B. Gibbons, M.~Gruteser, Z.~Harchaoui, C.~He, L.~He, Z.~Huo,
  B.~Hutchinson, J.~Hsu, M.~Jaggi, T.~Javidi, G.~Joshi, M.~Khodak,
  J.~Kone{\v{c}}n{\'y}, A.~Korolova, F.~Koushanfar, S.~Koyejo, T.~Lepoint,
  Y.~Liu, P.~Mittal, M.~Mohri, R.~Nock, A.~{\"{O}}zg{\"{u}}r, R.~Pagh, H.~Qi,
  D.~Ramage, R.~Raskar, M.~Raykova, D.~Song, W.~Song, S.~U. Stich, Z.~Sun,
  A.~T. Suresh, F.~Tram{\`{e}}r, P.~Vepakomma, J.~Wang, L.~Xiong, Z.~Xu,
  Q.~Yang, F.~X. Yu, H.~Yu, and S.~Zhao, ``Advances and open problems in
  federated learning,'' \emph{Foundations and Trends® in Machine Learning},
  vol.~14, no. 1--2, pp. 1--210, 2021.

\bibitem{Yin2021}
X.~Yin, Y.~Zhu, and J.~Hu, ``A comprehensive survey of privacy-preserving
  federated learning: A taxonomy, review, and future directions,'' \emph{ACM
  Comput. Surv.}, vol.~54, no.~6, 2021.

\bibitem{Koskela2020}
A.~Koskela and A.~Honkela, ``Learning rate adaptation for differentially
  private learning,'' in \emph{Proceedings of the International Conference on
  Artificial Intelligence and Statistics}, ser. Proceedings of Machine Learning
  Research, S.~Chiappa and R.~Calandra, Eds., vol. 108.\hskip 1em plus 0.5em
  minus 0.4em\relax PMLR, 2020, pp. 2465--2475.

\bibitem{MAbadi2016}
M.~Abadi, A.~Chu, I.~Goodfellow, H.~B. McMahan, I.~Mironov, K.~Talwar, and
  L.~Zhang, ``Deep learning with differential privacy,'' in \emph{Proceedings
  of the ACM SIGSAC Conference on Computer and Communications Security}, ser.
  CCS '16, 2016, pp. 308--318.

\bibitem{FiorettoH2019}
F.~Fioretto and P.~V. Hentenryck, ``Privacy-preserving federated data
  sharing,'' in \emph{Proceedings of the 18th International Conference on
  Autonomous Agents and MultiAgent Systems, {AAMAS}}, E.~Elkind, M.~Veloso,
  N.~Agmon, and M.~E. Taylor, Eds., 2019, pp. 638--646.

\bibitem{Zheng2021}
Q.~Zheng, S.~Chen, Q.~Long, and W.~Su, ``Federated f-differential privacy,'' in
  \emph{Proceedings of The 24th International Conference on Artificial
  Intelligence and Statistics}, ser. Proceedings of Machine Learning Research,
  A.~Banerjee and K.~Fukumizu, Eds., vol. 130.\hskip 1em plus 0.5em minus
  0.4em\relax PMLR, 2021, pp. 2251--2259.

\bibitem{Cramer2015}
R.~Cramer, I.~B. Damgrd, and J.~B. Nielsen, \emph{Secure Multiparty Computation
  and Secret Sharing}.\hskip 1em plus 0.5em minus 0.4em\relax Cambridge
  University Press, 2015.

\bibitem{Dwork2006}
C.~Dwork, ``Differential privacy,'' in \emph{Automata, Languages and
  Programming}, M.~Bugliesi, B.~Preneel, V.~Sassone, and I.~Wegener, Eds.,
  2006, pp. 1--12.

\bibitem{Dwork2014}
C.~Dwork and A.~Roth, ``The algorithmic foundations of differential privacy,''
  \emph{Found. Trends Theor. Comput. Sci.}, vol.~9, no. 3--4, pp. 211--407,
  2014.

\bibitem{Kairouz2016}
P.~Kairouz, S.~Oh, and P.~Viswanath, ``Extremal mechanisms for local
  differential privacy,'' \emph{J. Mach. Learn. Res.}, vol.~17, no.~1, pp.
  492--542, 2016.

\bibitem{Ding2017}
B.~Ding, J.~Kulkarni, and S.~Yekhanin, ``Collecting telemetry data privately,''
  in \emph{Advances in Neural Information Processing Systems}, 2017.

\bibitem{Han2017}
S.~{Han}, U.~{Topcu}, and G.~J. {Pappas}, ``Differentially private distributed
  constrained optimization,'' \emph{{IEEE} Transactions on Automatic Control},
  vol.~62, no.~1, pp. 50--64, 2017.

\bibitem{Olivier2020}
O.~Beaude, P.~Benchimol, S.~Gaubert, P.~Jacquot, and N.~Oudjane, ``A
  privacy-preserving method to optimize distributed resource allocation,''
  \emph{SIAM Journal on Optimization}, vol.~30, no.~3, pp. 2303--2336, 2020.

\bibitem{Caldas2019}
S.~Caldas, J.~Konecny, H.~B. McMahan, and A.~Talwalkar, ``Expanding the reach
  of federated learning by reducing client resource requirements,'' in
  \emph{Workshop on Federated Learning for Data Privacy and Confidentiality},
  ser. FL-NeurIPS, 2019.

\bibitem{He2022}
C.~He, M.~Annavaram, and S.~Avestimehr, ``Group knowledge transfer: Federated
  learning of large {CNNs} at the edge,'' in \emph{Proceedings of the
  International Conference on Neural Information Processing Systems}, ser.
  NIPS'20.\hskip 1em plus 0.5em minus 0.4em\relax Curran Associates Inc., 2022.

\bibitem{Hamer2020}
J.~Hamer, M.~Mohri, and A.~T. Suresh, ``{F}ed{B}oost: A communication-efficient
  algorithm for federated learning,'' in \emph{Proceedings of the International
  Conference on Machine Learning}, ser. Proceedings of Machine Learning
  Research, H.~D. III and A.~Singh, Eds., vol. 119.\hskip 1em plus 0.5em minus
  0.4em\relax PMLR, Jul 2020, pp. 3973--3983.

\bibitem{Rieke2020}
N.~Rieke, J.~Hancox, W.~Li, F.~Milletari, H.~R. Roth, S.~Albarqouni, S.~Bakas,
  M.~N. Galtier, B.~A. Landman, K.~Maier-Hein, S.~Ourselin, M.~Sheller, R.~M.
  Summers, A.~Trask, D.~Xu, M.~Baust, and M.~J. Cardoso, ``The future of
  digital health with federated learning,'' \emph{npj Digital Medicine},
  vol.~3, 2020.

\bibitem{ZZheng2022}
Z.~Zheng, Y.~Zhou, Y.~Sun, Z.~Wang, B.~Liu, and K.~Li, ``Applications of
  federated learning in smart cities: recent advances, taxonomy, and open
  challenges,'' \emph{Connection Science}, vol.~34, no.~1, pp. 1--28, 2022.

\bibitem{Briggs2021}
C.~Briggs, Z.~Fan, and P.~Andras, \emph{A Review of Privacy-Preserving
  Federated Learning for the Internet-of-Things}.\hskip 1em plus 0.5em minus
  0.4em\relax Springer International Publishing, 2021, pp. 21--50.

\bibitem{Zhang2023}
J.~Zhang, Z.~Ning, and F.~Xue, ``A two-stage federated optimization algorithm
  for privacy computing in internet of things,'' \emph{Future Generation
  Computer Systems}, vol. 145, pp. 354--366, 2023.

\bibitem{Zheng2023}
G.~Zheng, L.~Kong, and A.~Brintrup, ``Federated machine learning for privacy
  preserving, collective supply chain risk prediction,'' \emph{International
  Journal of Production Research}, vol.~0, no.~0, pp. 1--18, 2023.

\bibitem{Dobbe2020}
R.~{Dobbe}, Y.~{Pu}, J.~{Zhu}, K.~{Ramchandran}, and C.~{Tomlin}, ``Local
  differential privacy for multi-agent distributed optimal power flow,'' in
  \emph{{IEEE} {PES} Innovative Smart Grid Technologies {Europe}}, 2020, pp.
  265--269.

\bibitem{YANG2022}
Z.~Yang, M.~Chen, K.-K. Wong, H.~V. Poor, and S.~Cui, ``Federated learning for
  6g: Applications, challenges, and opportunities,'' \emph{Engineering},
  vol.~8, pp. 33--41, 2022.

\bibitem{Ghodsi2011}
A.~Ghodsi, M.~Zaharia, B.~Hindman, A.~Konwinski, S.~Shenker, and I.~Stoica,
  ``Dominant resource fairness: Fair allocation of multiple resource types,''
  in \emph{Proceedings of the {USENIX} Conference on Networked Systems Design
  and Implementation}, 2011, pp. 323--336.

\bibitem{Chiu1989}
D.~Chiu and R.~Jain, ``Analysis of the increase and decrease algorithms for
  congestion avoidance in computer networks,'' \emph{Computer Networks and ISDN
  Systems}, vol.~17, no.~1, pp. 1--14, 1989.

\bibitem{Syed2018_al}
S.~E. Alam, R.~Shorten, F.~Wirth, and J.~Y. Yu, ``Derandomized distributed
  multi-resource allocation with little communication overhead,'' in
  \emph{Allerton Conference on Communication, Control, and Computing}, Oct
  2018, pp. 84--91.

\bibitem{Wirth2014}
F.~Wirth, S.~St\"{u}dli, J.~Y. Yu, M.~Corless, and R.~Shorten, ``Nonhomogeneous
  place-dependent {M}arkov chains, unsynchronised {AIMD}, and optimisation,''
  \emph{Journal of the {ACM}}, vol.~66, no.~4, pp. 24:1--24:37, 2019.

\bibitem{Syed2018_ISC2}
S.~E. Alam, R.~Shorten, F.~Wirth, and J.~Y. Yu, ``Communication-efficient
  distributed multi-resource allocation,'' in \emph{{IEEE} International Smart
  Cities Conference ({ISC2})}, Sep. 2018, pp. 1--8.

\bibitem{Corless2016}
M.~Corless, C.~King, R.~Shorten, and F.~Wirth, \emph{{AIMD} Dynamics and
  Distributed Resource Allocation}, ser. Advances in Design and Control.\hskip
  1em plus 0.5em minus 0.4em\relax Philadelphia, PA: SIAM, 2016, no.~29.

\bibitem{Boyd2011}
S.~Boyd, N.~Parikh, E.~Chu, B.~Peleato, and J.~Eckstein, ``Distributed
  optimization and statistical learning via the alternating direction method of
  multipliers,'' \emph{Foundations and Trends® in Machine Learning}, vol.~3,
  pp. 1--122, 2011.

\bibitem{Zishan2021}
A.~A. Zishan, M.~M. Haji, and O.~Ardakanian, ``Adaptive congestion control for
  electric vehicle charging in the smart grid,'' \emph{IEEE Transactions on
  Smart Grid}, vol.~12, no.~3, pp. 2439--2449, 2021.

\bibitem{Shah2019}
S.~N. Shah, G.~P. Incremona, P.~Bolzern, and P.~Colaneri, ``Optimization based
  {AIMD} saturated algorithms for public charging of electric vehicles,''
  \emph{Eur. J. Control}, vol.~47, pp. 74--83, 2019.

\bibitem{Syed2018_B}
S.~E. Alam, R.~Shorten, F.~Wirth, and J.~Y. Yu, ``Distributed algorithms for
  {Internet-of-Things} enabled prosumer markets: {A} control theoretic
  perspective,'' in \emph{Analytics for the Sharing Economy: Mathematics,
  Engineering and Business Perspectives}, E.~C. et~al., Ed.\hskip 1em plus
  0.5em minus 0.4em\relax Springer, 2020.

\bibitem{Crisostomi2014}
E.~Crisostomi, M.~Liu, M.~Raugi, and R.~Shorten, ``Plug-and-play distributed
  algorithms for optimized power generation in a microgrid,'' \emph{{IEEE}
  Transactions on Smart Grid}, vol.~5, no.~4, pp. 2145--2154, 2014.

\bibitem{Fan2021}
X.~Fan, E.~Crisostomi, D.~Thomopulos, B.~Zhang, R.~Shorten, and S.~Yang, ``An
  optimized decentralized power sharing strategy for wind farm de-loading,''
  \emph{IEEE Transactions on Power Systems}, vol.~36, no.~1, pp. 136--146,
  2021.

\bibitem{Crisostomi2018}
E.~Crisostomi, R.~Shorten, S.~Studli, and F.~Wirth, \emph{Electric and plug-in
  hybrid vehicle networks: Optimization and control}.\hskip 1em plus 0.5em
  minus 0.4em\relax CRC press (Taylor and Francis Group), 2017.

\bibitem{Cormode2018}
G.~Cormode, S.~Jha, T.~Kulkarni, N.~Li, D.~Srivastava, and T.~Wang, ``Privacy
  at scale: Local differential privacy in practice,'' in \emph{Proceedings of
  the International Conference on Management of Data}, 2018, pp. 1655--1658.

\bibitem{JWang2021}
J.~Wang, Z.~Charles, Z.~Xu, G.~Joshi, H.~B. McMahan, B.~A. Arcas,
  M.~Al{-}Shedivat, G.~Andrew, S.~Avestimehr, K.~Daly, D.~Data, S.~N. Diggavi,
  H.~Eichner, A.~Gadhikar, Z.~Garrett, A.~M. Girgis, F.~Hanzely, A.~Hard,
  C.~He, S.~Horv{\'{a}}th, Z.~Huo, A.~Ingerman, M.~Jaggi, T.~Javidi,
  P.~Kairouz, S.~Kale, S.~P. Karimireddy, J.~Kone{\v{c}}n{\'y}, S.~Koyejo,
  T.~Li, L.~Liu, M.~Mohri, H.~Qi, S.~J. Reddi, P.~Richt{\'{a}}rik, K.~Singhal,
  V.~Smith, M.~Soltanolkotabi, W.~Song, A.~T. Suresh, S.~U. Stich,
  A.~Talwalkar, H.~Wang, B.~E. Woodworth, S.~Wu, F.~X. Yu, H.~Yuan, M.~Zaheer,
  M.~Zhang, T.~Zhang, C.~Zheng, C.~Zhu, and W.~Zhu, ``A field guide to
  federated optimization,'' \emph{CoRR}, vol. abs/2107.06917, 2021.

\bibitem{Erlingsson2014}
U.~Erlingsson, V.~Pihur, and A.~Korolova, ``Rappor: Randomized aggregatable
  privacy-preserving ordinal response,'' in \emph{Proceedings of the {ACM}
  {SIGSAC} Conference on Computer and Communications Security}, 2014, pp.
  1054--1067.

\bibitem{Chromium2022}
\BIBentryALTinterwordspacing
Chromium.org. (2022, October) Design documents: Rappor (randomized aggregatable
  privacy preserving ordinal responses). [Online]. Available:
  \url{http://www.chromium.org/developers/ design-documents/rappor.}
\BIBentrySTDinterwordspacing

\bibitem{Thakurta2017}
A.~Thakurta, A.~Vyrros, U.~Vaishampayan, G.~Kapoor, J.~Freudiger, V.R.Sridhar,
  and D.~Davidson, ``Learning new words,'' March 2017, {US} Patent 9,594,741.

\bibitem{AndrewH2018}
A.~Hard, K.~Rao, R.~Mathews, F.~Beaufays, S.~Augenstein, H.~Eichner, C.~Kiddon,
  and D.~Ramage, ``Federated learning for mobile keyboard prediction,''
  \emph{CoRR}, vol. abs/1811.03604, 2018.

\bibitem{Swaroop2019}
S.~Ramaswamy, R.~Mathews, K.~Rao, and F.~Beaufays, ``Federated learning for
  emoji prediction in a mobile keyboard,'' \emph{CoRR}, vol. abs/1906.04329,
  2019.

\bibitem{NVIDIA2020}
\BIBentryALTinterwordspacing
M.~Flores. (2020, October) Triaging covid-19 patients: 20 hospitals in 20 days
  build ai model that predicts oxygen needs. [Online]. Available:
  \url{https://blogs.nvidia.com/blog/2020/10/05/federated-learning-covid-oxygen-needs/}
\BIBentrySTDinterwordspacing

\bibitem{Alam2022_phd}
\BIBentryALTinterwordspacing
S.~E. Alam, ``Communication-efficient distributed multi-resource allocation,''
  Ph.D. dissertation, Concordia University, June 2022. [Online]. Available:
  \url{https://spectrum.library.concordia.ca/id/eprint/989944/}
\BIBentrySTDinterwordspacing

\bibitem{YShang2018}
Y.~Shang, ``Hybrid consensus for averager-copier-voter networks with
  non-rational agents,'' \emph{Chaos, Solitons \& Fractals}, vol. 110, pp.
  244--251, 2018.

\bibitem{Nguyen2019}
D.~T. Nguyen, L.~B. Le, and V.~K. Bhargava, ``A market-based framework for
  multi-resource allocation in fog computing,'' \emph{{IEEE}/{ACM} Transactions
  on Networking}, vol.~27, no.~3, pp. 1151--1164, 2019.

\bibitem{Fossati2020}
F.~Fossati, S.~Moretti, P.~Perny, and S.~Secci, ``Multi-resource allocation for
  network slicing,'' \emph{{IEEE}/{ACM} Transactions on Networking}, vol.~28,
  no.~3, pp. 1311--1324, 2020.

\bibitem{Alperen2020}
A.~G\"{u}ndo\u{g}an, H.~M. G\"{u}rsu, V.~Pauli, and W.~Kellerer, ``Distributed
  resource allocation with multi-agent deep reinforcement learning for {5G-V2V}
  communication,'' in \emph{Proceedings of the International Symposium on
  Theory, Algorithmic Foundations, and Protocol Design for Mobile Networks and
  Mobile Computing}, ser. Mobihoc '20, 2020, pp. 357--362.

\bibitem{Ashari2019}
J.~Khamse-Ashari, I.~Lambadaris, G.~Kesidis, B.~Urgaonkar, and Y.~Zhao, ``A
  cost-aware fair allocation mechanism for multi-resource servers,'' \emph{IEEE
  Networking Letters}, vol.~1, no.~1, pp. 34--37, 2019.

\bibitem{YShang2019}
Y.~Shang, ``Resilient consensus for expressed and private opinions,''
  \emph{IEEE Transactions on Cybernetics}, vol.~51, no.~1, pp. 318--331, 2021.

\bibitem{Poullie2018}
P.~{Poullie}, T.~{Bocek}, and B.~{Stiller}, ``A survey of the state-of-the-art
  in fair multi-resource allocations for data centers,'' \emph{{IEEE}
  Transactions on Network and Service Management}, vol.~15, no.~1, pp.
  169--183, 2018.

\bibitem{Zou2022}
W.~Zou, J.~Guo, C.~K. Ahn, and Z.~Xiang, ``Sampled-data consensus protocols for
  a class of second-order switched nonlinear multiagent systems,'' \emph{IEEE
  Transactions on Cybernetics}, pp. 1--12, 2022.

\bibitem{Tsitsiklis1986}
J.~{Tsitsiklis}, D.~{Bertsekas}, and M.~{Athans}, ``Distributed asynchronous
  deterministic and stochastic gradient optimization algorithms,'' \emph{{IEEE}
  Transactions on Automatic Control}, vol.~31, no.~9, pp. 803--812, 1986.

\bibitem{Tsitsiklis1984}
J.~N. Tsitsiklis, ``Problems in decentralized decision making and
  computation,'' Ph.D. dissertation, Massachusetts Institute of Technology,
  Cambridge, MA, {USA}, 1984.

\bibitem{Nedic2009}
A.~Nedic and A.~Ozdaglar, ``Distributed subgradient methods for multi-agent
  optimization,'' \emph{{IEEE} Transactions on Automatic Control}, vol.~54,
  no.~1, pp. 48--61, Jan. 2009.

\bibitem{Romao2021}
L.~Romao, K.~Margellos, G.~Notarstefano, and A.~Papachristodoulou,
  ``Subgradient averaging for multi-agent optimisation with different
  constraint sets,'' \emph{Automatica}, vol. 131, p. 109738, 2021.

\bibitem{Boyd2006_2}
S.~Boyd, A.~Ghosh, B.~Prabhakar, and D.~Shah, ``Randomized gossip algorithms,''
  \emph{{IEEE}/{ACM} Trans. Netw.}, vol.~14, no.~SI, pp. 2508--2530, 2006.

\bibitem{Pu2021}
S.~Pu, W.~Shi, J.~Xu, and A.~Nedic, ``Push–pull gradient methods for
  distributed optimization in networks,'' \emph{IEEE Transactions on Automatic
  Control}, vol.~66, no.~1, pp. 1--16, 2021.

\bibitem{Duchi2012}
J.~C. Duchi, A.~Agarwal, and M.~J. Wainwright, ``Dual averaging for distributed
  optimization: Convergence analysis and network scaling,'' \emph{{IEEE}
  Transactions on Automatic Control}, vol.~57, no.~3, pp. 592--606, March 2012.

\bibitem{Han2021}
D.~Han, K.~Liu, H.~Sandberg, S.~Chai, and Y.~Xia, ``Privacy-preserving dual
  averaging with arbitrary initial conditions for distributed optimization,''
  \emph{IEEE Transactions on Automatic Control}, 2021.

\bibitem{Nedic2011}
A.~Nedic, ``Asynchronous broadcast-based convex optimization over a network,''
  \emph{{IEEE} Transactions on Automatic Control}, vol.~56, no.~6, pp.
  1337--1351, 2011.

\bibitem{Silvestre2019}
D.~Silvestre, J.~P. Hespanha, and C.~Silvestre, ``Broadcast and gossip
  stochastic average consensus algorithms in directed topologies,'' \emph{IEEE
  Transactions on Control of Network Systems}, vol.~6, no.~2, pp. 474--486,
  2019.

\bibitem{Huo2022}
X.~Huo and M.~Liu, ``Privacy-preserving distributed multi-agent cooperative
  optimization—paradigm design and privacy analysis,'' \emph{IEEE Control
  Systems Letters}, vol.~6, pp. 824--829, 2022.

\bibitem{Konecny2016}
J.~Konecny, H.~B. McMahan, F.~X. Yu, P.~Richtarik, A.~T. Suresh, and D.~Bacon,
  ``Federated learning: Strategies for improving communication efficiency,'' in
  \emph{NIPS Workshop on Private Multi-Party Machine Learning}, 2016.

\bibitem{Li2020}
T.~Li, A.~K. Sahu, M.~Zaheer, M.~Sanjabi, A.~Talwalkar, and V.~Smith,
  ``Federated optimization in heterogeneous networks,'' in \emph{Proceedings of
  Machine Learning and Systems, {MLSys}}, 2020.

\bibitem{Sattler2020}
F.~{Sattler}, S.~{Wiedemann}, K.~R. {Muller}, and W.~{Samek}, ``Robust and
  communication-efficient federated learning from non-i.i.d. data,''
  \emph{{IEEE} Transactions on Neural Networks and Learning Systems}, vol.~31,
  no.~9, pp. 3400--3413, 2020.

\bibitem{Wang2020}
J.~Wang, Q.~Liu, H.~Liang, G.~Joshi, and H.~V. Poor, ``Tackling the objective
  inconsistency problem in heterogeneous federated optimization,'' in
  \emph{Annual Conference on Neural Information Processing Systems, {NeurIPS}},
  2020.

\bibitem{karimireddy2020}
S.~P. Karimireddy, S.~Kale, M.~Mohri, S.~Reddi, S.~Stich, and A.~T. Suresh,
  ``{SCAFFOLD}: Stochastic controlled averaging for federated learning,'' in
  \emph{Proceedings of the International Conference on Machine Learning}, vol.
  119, 2020, pp. 5132--5143.

\bibitem{Zhou2022}
Y.~Zhou, Q.~Ye, and J.~Lv, ``Communication-efficient federated learning with
  compensated {Overlap-FedAvg},'' \emph{IEEE Transactions on Parallel and
  Distributed Systems}, vol.~33, no.~1, pp. 192--205, 2022.

\bibitem{Yuan2021}
H.~Yuan, M.~Zaheer, and S.~Reddi, ``Federated composite optimization,'' in
  \emph{Proceedings of the 38th International Conference on Machine Learning},
  ser. Proceedings of Machine Learning Research, vol. 139, 2021, pp.
  12\,253--12\,266.

\bibitem{Huang2015}
Z.~Huang, S.~Mitra, and N.~Vaidya, ``Differentially private distributed
  optimization,'' in \emph{Proceedings of the International Conference on
  Distributed Computing and Networking}, ser. ICDCN, no.~4, 2015.

\bibitem{LeNy2014}
J.~{Le Ny} and G.~J. {Pappas}, ``Differentially private filtering,'' \emph{IEEE
  Transactions on Automatic Control}, vol.~59, no.~2, pp. 341--354, 2014.

\bibitem{Katewa2018}
V.~Katewa, F.~Pasqualetti, and V.~Gupta, ``On privacy vs. cooperation in
  multi-agent systems,'' \emph{International Journal of Control}, vol.~91,
  no.~7, pp. 1693--1707, 2018.

\bibitem{Ding2018}
T.~{Ding}, S.~{Zhu}, J.~{He}, C.~{Chen}, and X.~{Guan}, ``Consensus-based
  distributed optimization in multi-agent systems: Convergence and differential
  privacy,'' in \emph{IEEE Conference on Decision and Control (CDC)}, 2018, pp.
  3409--3414.

\bibitem{LeNy2020-book}
J.~L. Ny, \emph{Differential Privacy for Dynamic Data}, ser. SpringerBriefs in
  Control, Automation and Robotics.\hskip 1em plus 0.5em minus 0.4em\relax
  Springer Nature, 2020.

\bibitem{Farokhi2020}
F.~Farokhi, \emph{Privacy in Dynamical Systems}.\hskip 1em plus 0.5em minus
  0.4em\relax Springer Nature, 2020.

\bibitem{Fioretto2020}
F.~Fioretto, L.~Mitridati, and P.~V. Hentenryck, ``Differential privacy for
  stackelberg games,'' in \emph{Proceedings of the International Joint
  Conference on Artificial Intelligence, {IJCAI}}, 2020, pp. 3480--3486.

\bibitem{Duchi2013}
J.~C. {Duchi}, M.~I. {Jordan}, and M.~J. {Wainwright}, ``Local privacy and
  statistical minimax rates,'' in \emph{{IEEE} Annual Symposium on Foundations
  of Computer Science}, 2013, pp. 429--438.

\bibitem{Chen2021}
L.~Chen, B.~Ghazi, R.~Kumar, and P.~Manurangsi, ``On distributed differential
  privacy and counting distinct elements,'' in \emph{Innovations in Theoretical
  Computer Science (ITCS)}, 2021.

\end{thebibliography}

\begin{IEEEbiography}[{\includegraphics[width=1in,height=1.25in,clip,keepaspectratio]{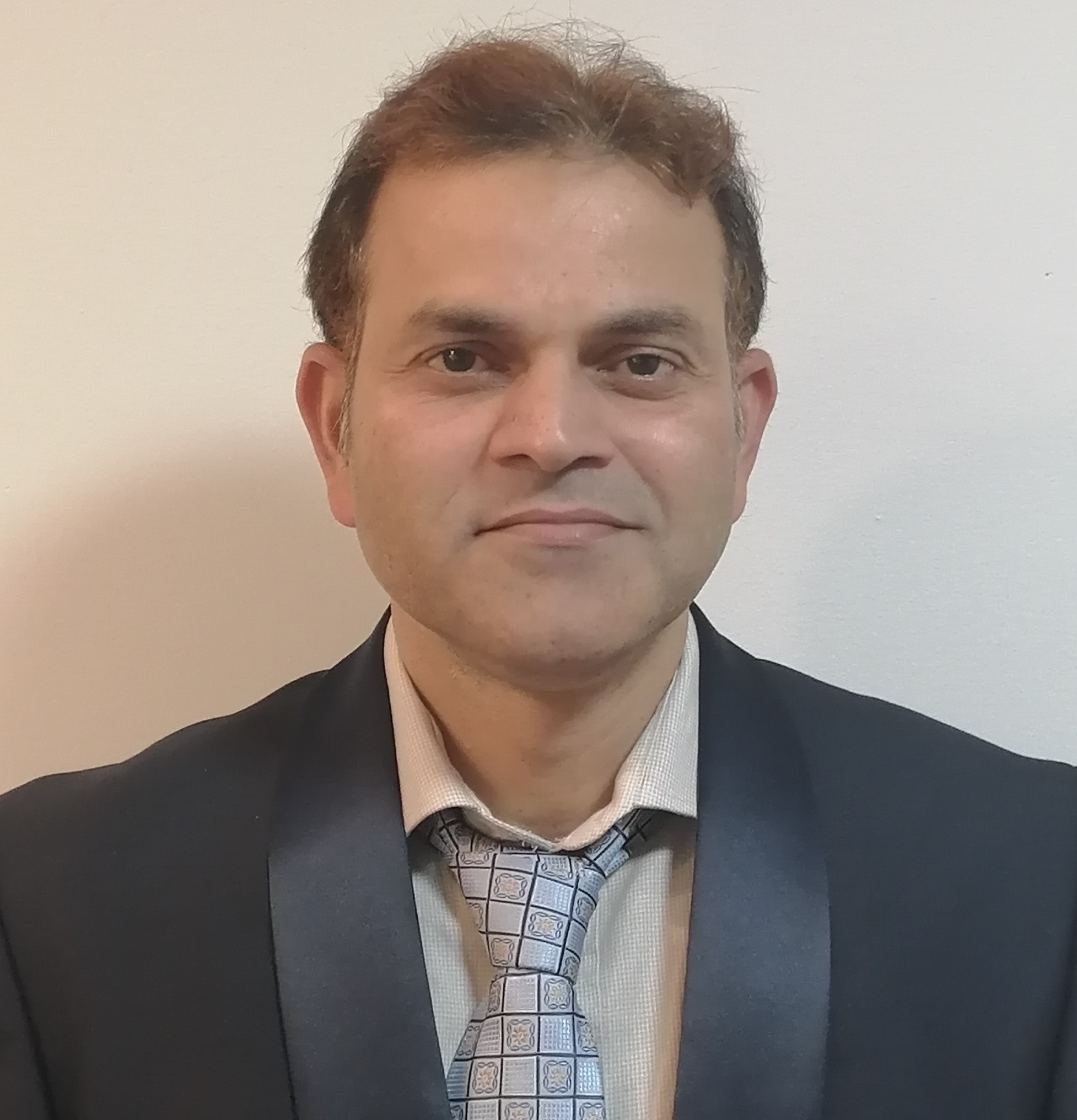}}]{Syed Eqbal Alam} is a Postdoctoral fellow at the J. Herbert Smith Center for Technology Management \& Entrepreneurship, Faculty of Engineering, University of New Brunswick, Fredericton, New Brunswick, Canada. He obtained his Ph.D. from the Concordia Institute for Information Systems Engineering (CIISE), Concordia University, Montreal, Quebec, Canada. Before enrolling in the Ph.D. program, he was a lecturer at Taif University, Saudi Arabia. He taught undergraduate computer science courses such as object-oriented programming, data structure and algorithms, distributed systems, artificial intelligence, and discrete mathematics. Furthermore, Syed Eqbal obtained his Master of Technology degree from the International Institute of Information Technology, Bangalore (IIIT-B), India.

 Dr. Syed Eqbal's interest is in performing research that leads to excellent theories and exciting applications. He broadly works on distributed optimization, federated optimization, differential privacy, Internet-of-things, federated learning, and algebraic structures. 
  Dr. Syed Eqbal has been a reviewer for several conferences and journals, including IEEE Conference on Decision and Control, European Conference on Artificial Intelligence, IEEE Transactions on Automatic Control, and IEEE Access.
\end{IEEEbiography}
\begin{IEEEbiography}[{\includegraphics[width=1in,height=1.25in,clip,keepaspectratio]{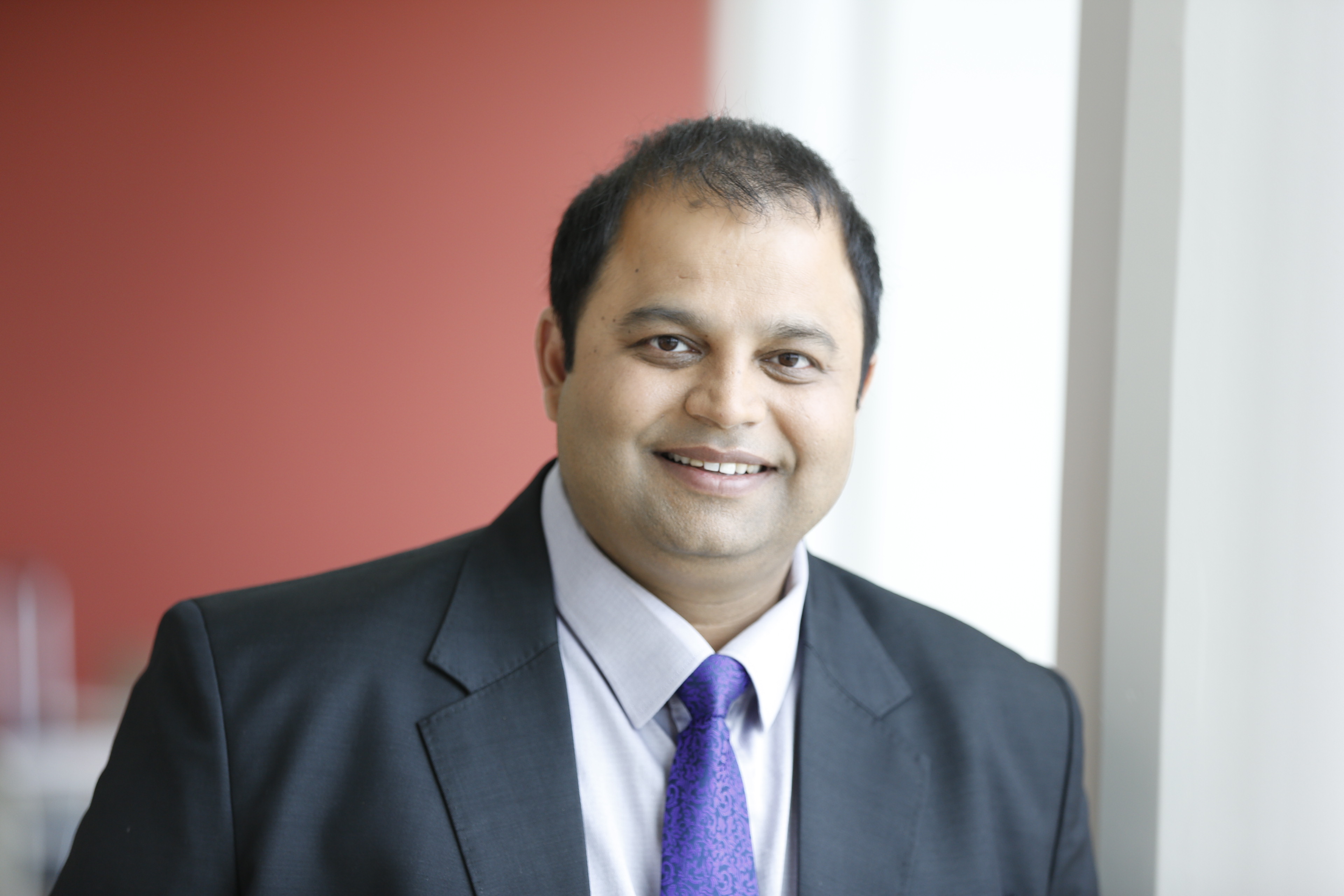}}]{Dhirendra Shukla} obtained his Ph.D. in Entrepreneurial Finance from the University of London in the UK (King's College London). He obtained his MBA from the Telfer School of Management at the University of Ottawa (Canada). Dr. Dhirendra Shukla is a Professor and the Dr. J Herbert Smith ACOA Chair in Technology Management and Entrepreneurship 
at the University of New Brunswick, Fredericton. 

Dr. Dhirendra utilizes his expertise from the telecommunication sector and extensive academic background in the areas of entrepreneurial finance, business administration, and engineering to promote a bright future for New Brunswick, Canada. Passionate about entrepreneurship, design, engineering, innovation, and leadership, Dr. Dhirendra ignites this same passion in others through his teaching at UNB, local entrepreneurial community outreach, and the creation of programs such as the Master of Technology Management and Entrepreneurship, Technology Commercialization Program and Summer Institute. Recognition of his tireless efforts and vision is demonstrated through UNB’s 2014 award from Startup 
Canada as “Most Entrepreneurial Post Secondary Institution of the Year”,  his nomination as a finalist for Industry Champion by KIRA, and most recently, his nomination as a finalist for Progress Media’s Innovation in Practice award. 
\end{IEEEbiography}
\begin{IEEEbiography}[{\includegraphics[width=1in,height=1.25in,clip,keepaspectratio]{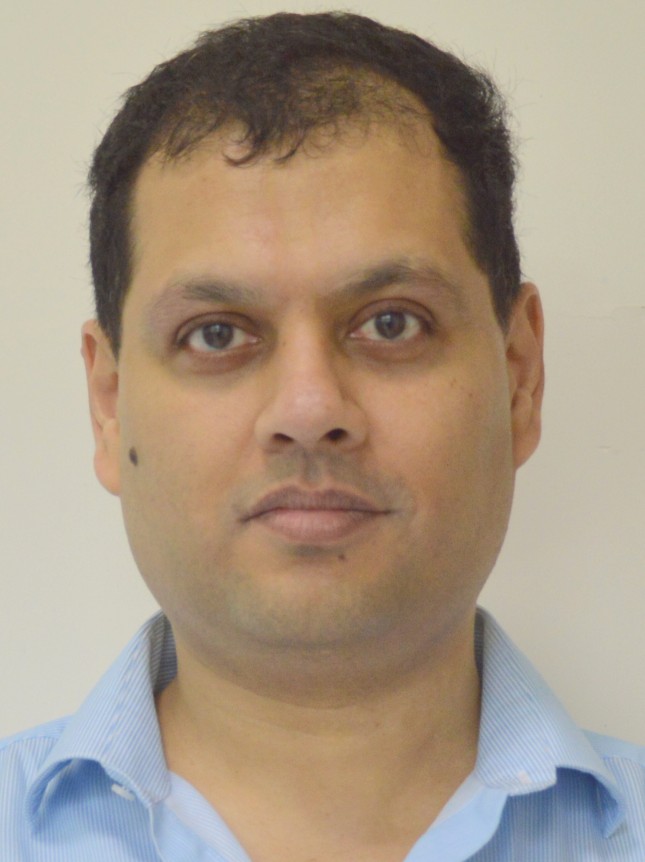}}]{Shrisha Rao} received his Ph.D. in computer science from the University of  Iowa, and before that his M.S. in logic and computation from Carnegie 
Mellon University.  He is a professor at IIIT-Bangalore, a graduate school 
of information technology in Bangalore, India.  His primary research 
interests are in applications of agent-based modeling and artificial 
intelligence to understand human cognition and emergent social phenomena.  
He also has interests in sustainability, bioinformatics, and intelligent 
transportation systems.  

Dr. Rao was an ACM Distinguished Speaker from 2015 to 2021, and is a 
Senior Member of the IEEE.  He is also a life member of the American 
Mathematical Society and the Computer Society of India.

Dr. Rao is on the editorial boards of Sādhanā, The Knowledge Engineering 
Review, and PLOS ONE, and has had editing assignments for the IEEE Systems 
Journal and Frontiers in Public Health.   He has been a reviewer for many 
journals of several major publishers, and has served as a program 
committee member for several international conferences. 
\end{IEEEbiography}

\EOD

\end{document}